\newcommand{\ls}[1]
    {\dimen0=\fontdimen6\the\font
     \lineskip=#1\dimen0
     \advance\lineskip.5\fontdimen5\the\font
     \advance\lineskip-\dimen0
     \lineskiplimit=.9\lineskip
     \baselineskip=\lineskip
     \advance\baselineskip\dimen0
     \normallineskip\lineskip
     \normallineskiplimit\lineskiplimit
     \normalbaselineskip\baselineskip
     \ignorespaces
}
\newtheorem{proposition}{Proposition}
  \newtheorem{theorem}{Theorem}
\begin{document}
 \date{}
\title{Quantum  Learning  Based Nonrandom Superimposed Coding for Secure \\ Wireless  Access in 5G URLLC}
	\vspace{-10pt}
\author{Dongyang~Xu
	and Pinyi~Ren
	\vspace{-10pt}
%\thanks{
%	The research work reported in this paper was  in part supported by the National
%	Natural Science Foundation of China under the Grants No. 62001368 and
%	No. 61941119, and Fundamental Research Funds for the Central Universities,
%	China.  The associate editor coordinating the review of this paper and approving it for publication was Prof. Stefano Tomasin.
%}
%	\thanks{\IEEEcompsocthanksitem D. Xu, and P. Ren  are with the School of Information and Communications Engineering, Xi'an Jiaotong University, Xi'an, Shaanxi 710049 China (E-mail: xudongyang@xjtu.edu.cn,  pyren@mail.xjtu.edu.cn). D. Xu and P. Ren  are also with  Shaanxi Smart Networks and Ubiquitous Access Research Center. 
%	}
%}
%\markboth{ IEEE Transactions on Information Forensics and Security,~Vol.~**, No.~**, Jan.~2021}%
%{Xu \MakeLowercase{\textit{et al.}}: Quantum  Learning  Based Nonrandom Superimposed Coding for Secure  Wireless  Access in 5G URLLC}
%	\thanks{
%	}	
% \author{Dongyang~Xu,~\IEEEmembership{Member,~IEEE,}
% 	and James~A.~Ritcey,~\IEEEmembership{Fellow,~IEEE}\
% 	\vspace{-10pt}
% 	\thanks{\IEEEcompsocthanksitem D. Xu (corresponding author)  is an assistant professor  in  the School of Information and Communications Engineering, Xi'an Jiaotong University, Xi'an, Shaanxi 710049 China (E-mail: xudongyang@xjtu.edu.cn) and  also with  Shaanxi Smart Networks and Ubiquitous Access Research Center.
% 	\IEEEcompsocthanksitem J. A. Ritcey are with the Department of Electrical ${\text{\&}}$ Computer Engineering, University of Washington, Seattle, WA 98195 USA (e-mail: jar7@ece.uw.edu).
% 	}
 }
 \maketitle 
 %in which an adversary may transmit identical pilot signals as those of the legitimate receiver to spoof the transmitter
 %facilitate the frequency-domain sub-channel estimation of the legitimate transceiver pair on the basis of  randomized pilot signal.
 \begin{abstract}
Secure wireless access in ultra-reliable low-latency communications (URLLC), which is a critical aspect of 5G security, has become increasingly  important due to its potential support of grant-free configuration. In  grant-free  URLLC, precise  allocation of  different pilot resources to different users that share the  same time-frequency resource is essential for the next generation NodeB (gNB) to exactly identify   those users  under access  collision and to maintain precise channel estimation required for reliable  data transmission. However, this process easily suffers from  attacks on pilots. We in this paper propose a quantum learning  based nonrandom  superimposed coding  method to encode and decode pilots on  multidimensional resources, such that the uncertainty of attacks  can be learned  quickly and eliminated precisely. Particularly, multiuser pilots for uplink access are encoded  as distinguishable subcarrier activation patterns (SAPs) and gNB decodes pilots of interest from  observed SAPs, a superposition of SAPs from access users, by  joint design of attack mode detection  and user activity detection  though a quantum learning network (QLN). We found that the uncertainty lies in the  identification process of codeword digits from the attacker, which can be always modelled as a black-box model, resolved by a quantum learning algorithm and quantum circuit.  Novel analytical closed-form expressions of failure probability are derived to characterize the reliability of this  URLLC system  with short packet transmission.  Simulations  how that our method can bring ultra-high reliability and low latency despite  attacks on pilots.
 \end{abstract}
\begin{IEEEkeywords}
5G URLLC, secure access, pilot, quantum learning, superimposed coding. 
\end{IEEEkeywords}
 %\ls{0.6}
 %\newpage
\section{Introduction}
\label{introduction}
%\ls{0.9}
In 5G systems, ultra-reliable low-latency communications (URLLC)  has been a basis technology for an entirely new family of mission-critical use cases, such as  intelligent transportation, remote healthcare, industrial automation and so forth~\cite{TS22_261}. As the 3rd Generation Partnership Project (3GPP) claims, URLLC is required to meet the requirements of  reliability (99.9999$\%$ higher or more) and  latency in the order of 1ms or less. To remedy to this, 3GPP has been using a brute-force approach centered on system-level evaluation  and  a plethora of techniques including short packet transmission~\cite{Durisi}, grant-free mechanisms~\cite{R1_1703329}, leveraging spatial, frequency, and temporal diversity techniques~\cite{Bennis,Mahmood,TR38_824}. Among them,  grant-free   multiple  access (GFMA), which deletes the procedures of scheduling request and  scheduling grant,  has been deemed as a key enabler of URLLC.

%In Release 15 the basic support for URLLC was introduced with TTI structures for low latency as well as methods for improved reliability. Use cases with tighter requirements, e.g. higher reliability up to 1E-6 and short latency in the order of 0.5 to 1ms, have been identified as important areas for NR.

In grant-free URLLC, procedures for uplink wireless access  have changed significantly  with respect to  traditional access procedures driven by pre-scheduling~\cite{K_S_Kim}.  Fig.~\ref{Attack_issue} depicts a general framework of grant-free access procedures  under radio resource control (RRC) connected state,   including  user activity detection (UAD),  channel estimation and data transmission. During uplink access, the next generation NodeB (gNB) needs to perform UAD before channel estimation to identify the activities of users that attempt to  access URLLC services~\cite{Liu_Y}. However, collision will happen if two or more users that share the same resources perform the uplink access at the same time, and in this case the gNB may not be able to successfully decode all of the uplink transmissions~\cite{TR38_881}.  In view of this, 3GPP specification has regulated the use of pilot-based user identification to distinguish colliding users and the use of pilot-based channel estimation to provide  channel information required for decoding data. According to the specification,  the gNB needs to allocate different distinguishable pilot resources to different users that share the same  resource when collision happens, which means that  the process of UAD, channel estimation and data transmission  all rely on  pilot sequences that  are distinguishable from each other.

\begin{figure}[!t]
	\centering \includegraphics[width=1.00\linewidth]{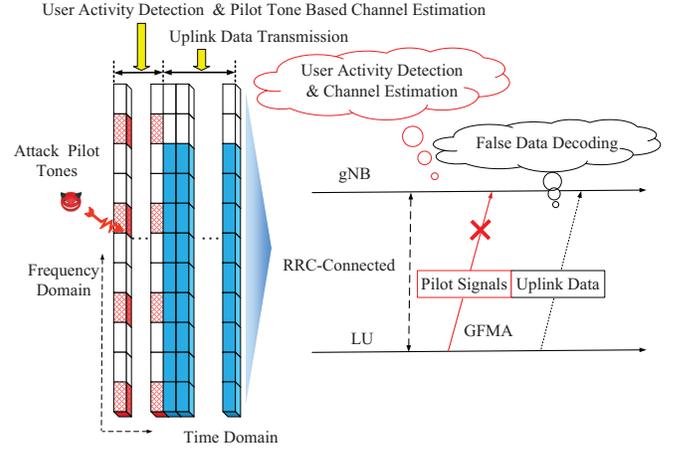}
	\caption{Illustration of uplink access procedures in grant-free URLLC and its security issue. }
	\vspace{-15pt}
	\label{Attack_issue}
\end{figure}

However, the recent research have shown that pilot-based mechanisms are currently suffering from a high risk of pilot-aware attack, a physical-layer threat that can acquire,  jam, spoof and null pilot sequences of interest~\cite{Shahriar}.   Indeed, paralyzing  uplink access through  tampering with pilots, preferred by attack,  is much  easier and more efficient than directly disturbing data transmission~\cite{TCC_Clancy}.  When a pilot aware attack occurs in grant-free URLLC,  UAD  and channel estimation become imprecise, indirectly causing large-scale low-reliability  huge-latency  data transmission in the uplink. Basically, the above risk stems from the vulnerability that  configuration parameters of pilots, including value/frequency location/bandwidth/subcarrier space of pilot sequences, usually  decoded from system information before RRC connected,  are   publicly-known without security  protection of upper layer mechanism, like authentication and key agreement,   throughout  current 5G new radio protocol and  easily compromised under pilot aware attack. 

Security for 5G URLLC   has  been studied early by 3GPP System Aspects Working Group 2 (SA2) in~\cite{TR33_825} and later included in R16~\cite{TR33_501}, stressing  the necessity of security mechanisms to deal with all potential security risks of URLLC on user-plane data transmission. As expected, the protection of  pilots in the uplink access has been a critical, complex and  unsolved aspect of security for grant-free URLLC. Note that URLLC is for user plane traffic according to TR 23.725~\cite{TR23_725} and any  upper-layer cryptographic  optimization during the very initial registration is not pursued for URLLC services~\cite{TR33_825}. This  imposes novel challenges on the uplink access subject to pilot-aware attack, rooted in the following aspects:
\begin{enumerate}
	\item \textit{Pilot Protection}: The  maintenance of distinguishability among deterministic pilot sequences is critical but very hard when they are jammed, spoofed and nulled successfully.
	\item \textit{Complex Access Design}: On one hand, the loss of distinguishability among  pilot sequences under attack makes both UAD and channel estimation invalid. On the other hand,  any scheme trying to protect  pilots has to necessarily but toughly  support the functionalities of  precise  UAD and reliable channel estimation. 
	\item \textit{Performance Evaluation in URLLC Scenarios}: 
	It is necessary to  re-evaluate the influence of uplink access design on  reliability and latency performance and to examine whether or not the design satisfy the basic requirements of URLLC services.
\end{enumerate}

% These and other potential security issues should be analyzed and investigated by the SA3 WG in order to securely meet the performance requirements for the 5G URLLC
\subsection{Related Works}
Recently, it has been pointed out in~\cite{Park} that  5G  URLLC requires  a tight coupling between ultra-high reliability and low latency to maintain the same level of security and be efficient in doing so at the same time. The former implies need for stringent security guarantee while the latter  implies extremely fast security provision. Progressing towards above goal  in uplink grant-free URLLC system  needs to tackle lots of  open  issues, including GFMA,  UAD, channel  estimation and so forth~\cite{Park}. These problems, as  specified both  in  standardization and academic communities, revolve around  the design of pilot, a  critical factor to support precise UAD and channel estimation in  grant-free URLLC~\cite{TR38_881,Jiang,Chen,Y_Li}.  In this situation, it is worthwhile to analyze and defend against the threat of  pilot aware attack in order to  meet the performance requirements for current 5G URLLC. Authors in~\cite{Q_Qi} have  investigated the  influence of pilot aware attack on GFMA  and  further stress the importance of pilot protection.

The countermeasures against pilot aware attack  have been rarely  researched in grant-free URLLC  while  studied a lot  in grant-based  systems with non-URLLC services  after the seminal work in~\cite{Clancy_denial,Sodagari_jamming}.  The insightful research in this area provide a natural basis for security design  in grant-free URLLC system, which can be divided into two main categories, one adopting  deterministic-pilot  based approach~\cite{Kapetanovic1,Tugnait2,Wu1} and the other one using random-pilot based approach~\cite{Sodagari_randomization,Xu_ICC}. The main point of the deterministic-pilot  based approach is to insist on storing and using publicly known and deterministic pilots  between transceivers. In this case,  the attack with strong ability of stealing prior information is quite hard to avoid and the removal of attack is difficult unless discarding the data. What's more, the countermeasures were limited to solely detecting the existence of pilot-aware attack, rather than further eliminating the attack. Random-pilot based approach was studied to  discover security advantages on random signal domain, which provides a potential pathway for recovering  true pilots on probabilistic domain. Its security performance was determined by the process of  transmission and retrieval of pilot (TRP). 
Basically, TRP requires that users transmit a set of  random pilot  sequences from a stored set and then gNB retrieves pilots from the received signals  by  testing whether or not a pilot sequence of interest is a right member of the stored set. 

Initially, the main focus of TRP is on  time domain and the design challenge is that random pilot signals transmitted over the air might be completely hidden by random channel taps  and attack signals, causing information confusion.   Despite the fact that adopting spatial domain property, e.g., natural separation of spatial correlation of massive-antenna
arrays, can reduce the effect of attack, the attacker can adjust its correlation property to be similar with one user of interest by roughly revising its spatial location~\cite{Adhikary}. At this point, the attacker would become more well-directed, rather than less effective.

To  address above challenges, many research turn to focus on time-frequency-code domain and  encode subcarrier activation pattern (SAP)  as robust medium to support TRP~\cite{Xu_ICC,ICA_SBDC,Zhenghao_Zhang}. Specifically, subcarriers  are activated and encoded as binary non-zero digits  when user transmits power on selected subcarriers. Otherwise when user keeps silent on them, those subcarriers are deactivated and  deemed as zero digits.  Therefore, various SAP candidates can be created and encoded by selectively activating and deactivating  subcarriers. In this way, each of  pilot signals can be carefully encoded as a unique binary codeword which corresponds to a unique SAP. When pilot signals from all users add up in the air and are received by gNB, the gNB needs to get aware of the subcarrier assignments and inspect the signal power on subcarriers to retrieve the code information. However, random pilot on time-frequency-code domain introduced a novel type of attack-defense domain on which the attack mode observed by gNB is transformed from a pilot-aware attack into a hybrid mode that embraces silence cheating and purely jamming attack~\cite{Xu_ICC}.  Nevertheless, the feasibility of applying  
SAP coding method  to secure TRP  against pilot-aware attack has been verified in ~\cite{Xu_ICC} in single-user grant-based scenario whereas  the scheme can hardly support multiple access.  

Actually, the concept of  SAP coding has been researched in many literatures in URLLC scenarios. Coding SAP for TRP  accords with the URLLC design principle, where exploiting  multi-domain diversity has been identified as a promising enabler of URLLC~\cite{Azari}. Furthermore, coding SAP  can create   a frequency-domain non-square packet structure which has been used as a baseline in 5G new radio since it can minimizes the transmission latency~\cite{Park}. 
Authors in~\cite{Hamamreh}  have concluded that encoding subcarrier activation patterns (SAPs) to carry  information in parallel with data transmission can improve the system reliability  with  less power and complexity, making it a very suitable candidate for 5G URLLC services. This technique was also studied in  grant-free URLLC systems in~\cite{Chen_URLLC}. However, the coding structure embedded in SAPs cannot directly be employed for pilot protection due to its lack of resilience against attack on code. The design of novel SAP coding method for pilot protection  has not attracted sufficient  attention  in grant-free URLLC.
\subsection{Motivations and Contributions}
The hints  above  motivate us to design coding principle of SAP 
for secure  uplink access in grant-free URLLC. To this end,  we should carefully consider the  access procedures under the circumstance of random pilots, random channels and a hybrid attack. SAP coding should be able  to protect pilots and at the same time support access procedures. This  refers to  three basic  aspects:
\begin{enumerate}
	\item Design novel access procedures  under SAP coding.
	\item Optimize coding principle to  resolve the effect of pilot aware attack.
	\item Evaluate the reliability and latency performance of novel grant-free URLLC system. 
\end{enumerate}
Considering the need for attack defense, we firstly introduce the functionality of attack mode detection (AMD) into uplink access, making access procedures embrace five parts, i.e., AMD, UAD, TRP, channel estimation and data transmission. Indeed, those procedures are expected to highly rely  on  pilots.  Secondly, we exploit  the framework of nonrandom superimposed coding on SAPs to protect pilots. 
In coding theory, the typical nonrandom superimposed code which was first considered by Kautz-Singleton was a set of binary vectors having the property that no vector is contained in a boolean sum (i.e. bitwise OR) of a small number of others~\cite{Kautz}. However, the original  nonrandom superimposed code  cannot provide resilient  security attributes. The key task before  us is to optimize the code on one hand and make it compatible with new access procedures on the other hand. To achieve the objective, we focus on the joint design of AMD, UAD and TRP through a five-layer quantum learning network (QLN), a precise and computing-efficient solution to the design. Our network not only allows  precise   AMD and UAD, but also supports  quick learning  and elimination of  decoding uncertainty  caused by  attack so that TRP can be protected well. The motivation behind QLN  is based on the following three key observations: 1) SAP coding brings to the nonrandom superimposed code signal  feature which is a useful resource for decoding enhancement but rarely exploited; 2)The  decoding uncertainty results from  the unauthentic codeword digits  generated by SAPs from the attacker. If those digits are found and erased, the  attack will be eliminated completely. 3) Quantum learning has the ability of finding the function of  uncertainty quickly. The specific goal of QLN is to develop a multi-layer decoding network that can exploit signal features and  quantum learning  to  remove the decoding uncertainty of nonrandom superimposed code under attack. To achieve this goal, the network is required to design AMD and  UAD  by signal features. Based on this design, the function of  identification of unauthentic codeword digits from the attacker can be modelled as a black-box model which is further resolved by quantum parallelism and interference.  Key contributions of our work are summarized  as follows:
\begin{enumerate}
	\item \textit{Access Design:} This work is the first to present a detailed framework of joint design of  AMD, UAD, TRP, channel estimation and data transmission under attack on pilots. 
	\item \textit{QLN Design:} We show how to use a quantum learning layer to model the  identification of unauthentic codeword digits from the attacker during TRP as a black-box boolean function, which is the  core of QLN design. Two distinguishable  binary quantum states  are  proved to exist as  the inputs of black-box boolean function and the measurement at the output could always get the global information from the two inputs. Thanks to this characteristic, the model can be learned precisely via a quantum circuit and the  decoding uncertainty caused by attack can be eliminated completely.  In our work, quantum learning is for the first time proved to have the potential of protecting nonrandom superimposed code   threatened by attack. Achieving  this potential requires the help of signal features.  This finding is important for future decoding method design since it provides a framework of modelling and resolving  attack uncertainty imposed on information coding. 
	\item \textit{Performance Analysis:} Novel  analytical closed-form expressions of failure probability   of grant-free URLLC system are derived to measure the reliability   in the regime of large-scale  antenna arrays and short data packets.  With these efforts, we can finally show how  quantum  learning  based nonrandom superimposed code  can efficiently protect the uplink access in grant-free URLLC.
\end{enumerate}
\emph{Organization:} In Section~\ref{Preliminaries}, we review the basic concepts  of nonrandom superimposed  coding and quantum learning. System model is given in~\ref{SyM}.  Design of QLN for secure uplink access  under attack  is  described  in details in  Section~\ref{QLNSDUA}.  We also present the design principle of quantum learning layer  in Section~\ref{DPQLL} and provide the performance analysis  in Section~\ref{PA}.  Numerical results are presented in Section~\ref{NR} and finally  we conclude our work in Section~\ref{Conclusions}.

 \section{Preliminaries}
 \label{Preliminaries}
 \subsection{Nonrandom Superimposed  Coding}
 In coding theory, the situation commonly encountered is that
 the codewords must be restored from partial information, like
 defected data (error correcting codes), or some superposition of
 the strings. These requirements lead to the technique of nonrandom superimposed coding.
 The typical nonrandom superimposed code which was first considered by
 Kautz-Singleton is a set of binary vectors having the property
 that no vector is contained in a boolean sum (i.e. bitwise OR)
 of a small number of others. Nonrandom superimposed coding has
 been widely used in various areas, such as, information retrieval
 for constructing signature files, multi-channel access for
 wireless communications and so forth.
 
 A nonrandom superimposed code of length $B$ and order $K$ has three properties:
 \begin{enumerate}
 	\item Each codeword  has a unique identifier.
 	\item Each  superposition of up to  $K$ different codewords is  unique and  each of the superimposed codeword can be  correctly decomposed   into a unique set of $K$ codewords.
 	\item Each of decomposed codewords from the superimposed codeword  is identified uniquely.
 \end{enumerate}
 With nonrandom superimposed coding, an information transmitter is enabled to express a set of information sequences as discrete  codewords.  Multiple codewords, when superimposed with each other, e.g., under wireless channel environment or in storage system,  can be decoded precisely  at  receiver according to the pre-shared codebook. Then the receiver tests whether or not a  codeword of interest is a right member of the stored set. In order to analyze the coding performance, we consider using the maximum distance separable code~\cite{Singleton} to  construct the nonrandom superimposed code. The specific  construction method  can be found in~\cite{Kautz}. The rate of  superimposed code  of  length  $B$ and cardinality $C$, denoted by $ R_{\rm c}$, is defined  by:
 \begin{equation}
 	R_{\rm c}{\rm{ = }} {{{{{\log }_2}C}} \mathord{\left/
 			{\vphantom {{\left( {{{\log }_2}C} \right)} {B}}} \right.
 			\kern-\nulldelimiterspace} {B}}
 \end{equation}
 where  $ B = q\left[ {1 +  K\left( {k - 1} \right)} \right],C = {q^k}, q \ge  K\left( {k - 1} \right)  \ge 3, K\ge 2$.
 
 \subsection{Quantum Learning}
 In quantum information processing, a quantum algorithm is an algorithm which runs on a quantum circuit model of computation~\cite{A_Montanaro}. In order to  describe the step-by-step computation,  it is necessary for any quantum algorithm to develop a suitable quantum oracle, also named as ``black-box" oracle. Particularly, if the unknown functions for  describing a specific problem is expected to be learned well by quantum algorithms, formulating a precise model of  quantum oracle from that  problem is the first and most-critical step. There exist lots of quantum algorithms, depending on the specific problems in that application area. One of the first examples of quantum algorithms is Deutsch-Jozsa algorithm which  is exponentially faster than any possible deterministic classical algorithm~\cite{D_Deutsch}. The algorithm requires 
 two qubits to distinguish  constant  functions in a black-box problem. A quantum algorithm with single-qubit input  is recently proposed to solve the same black-box problem in Deutsch algorithm~\cite{Z_Gedik}.
 
 Quantum learning  is a technique of  extracting  information from the quantum oracle of quantum algorithm to determine whether or not the oracle has some specific function properties~\cite{A_Cross}.  Typically,  such an quantum oracle  is  assumed to belong to some particular a priori fixed class $\cal C$ of possible functions. This model of quantum learning differs from other attempts to use quantum computers to perform machine learning tasks The reason behind the success of quantum learning in this area is the existence of hidden problem structure that quantum computers can exploit in ways that classical computers cannot.  Finding such hidden structure in other problems of practical interest remains an important open problem~\cite{A_Montanaro}.  For example, there is a class
 of functions that is polynomial time learnable from quantum
 coherent queries but not from classical queries~\cite{R_Servedio}.
 
 In this paper,  a concept $f$ over ${\left\{ {0,1} \right\}^n}$ is defined by a Boolean function $f:{\left\{ {0,1} \right\}^n} \to \left\{ {0,1} \right\}$ and a concept class $\cal C$ is a collection of concepts over ${\left\{ {0,1} \right\}^n}$.  
 Quantum learning  for a concept class $\cal C$ is defined as a sequence of unitary transformations ${U_1}$,${U_f}$,${U_2}$,${U_f}$, ${U_3}$,${U_f}$, $ \ldots $, where each ${U_i}$ is a fixed unitary transformation without any dependence on the concept.
 ${U_f}$ denotes  the quantum membership oracle which  is a transformation  acting on the computational basis states by mapping $\left| {x,y} \right\rangle  \mapsto \left| {x,y \oplus f\left( x \right)} \right\rangle $ where $x \in {\left\{ {0,1} \right\}^n}$ and $y \in \left\{ {0,1} \right\}$. 
 Quantum learning  provides a quantum state that encodes a hidden function, and its goal is to discover the function efficiently, meaning with a number of queries and an amount of postprocessing that scales polynomially in the number of input bits.
 
 \begin{figure}[!t]
 	\centering \includegraphics[width=1.00\linewidth]{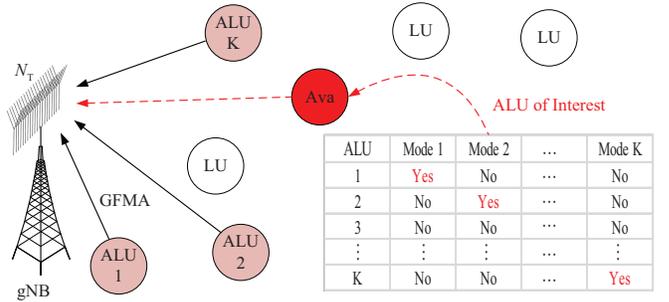}
 	\caption{Illustration of grant-free URLLC under piot randomization. }
 	\vspace{-15pt}
 	\label{System_model_sm}
 \end{figure}
 
\section{System Model}
\label{SyM}
\subsection{System Description of Grant-Free URLLC}
We consider an uplink single-input multiple-output (SIMO) orthogonal frequency
division multiplexed (OFDM)  system with a $N_{\rm T}$-antenna gNB (named as Alice)  and $G$ single-antenna legitimate users (LUs), indexed by the  set $\cal G$ with $\left| {\cal G} \right|=G$. As shown in Fig.~\ref{System_model_sm}, this system aims to provide grant-free URLLC services for $G$ uplink LUs. We assume that the buffer queues of all LUs are empty after initial system configuration.  When new bursty traffic arrives, a total of $K$ LUs, indexed by  set $\cal K$ with $\left| {\cal K} \right|=K$, will be activated immediately and then perform  GFMA to Alice  simultaneously.   This is done  by transmitting pilot tones and uplink short packet data. For simplicity, we denote those LUs needing  bursty traffic transmission as active LUs (ALUs). Correspondingly, Alice  is required to perform UAD, channel estimation and uplink data decoding. To  this end, traditionally, Alice is  supposed  to arrange deterministic pilot tones across subcarriers following Block type to support UAD and channel estimation~\cite{Ozdemir}. However, this set-up which is publicly known to all parties  inevitably  attracts a single-antenna attacker Ava, able to launch a pilot-aware attack (i.e., pilot jamming, spoofing, and nulling attack) on  pilot subcarriers to paralyze UAD and channel estimation, further causing data transmission outage. To defend against this  attack, Alice and ALUs  are required to adopt pilot randomization technique to randomize the values of pilots on each subcarrier. At Alice, three basic functionalities are configured, including  
\begin{enumerate}
	\item \textbf{AMD}:  Due to pilot randomization, a novel type of attack-defense domain is  introduced  on which the attack mode observed by Alice is transformed from a pilot-aware attack into a hybrid mode~\cite{Xu_ICC}. The  model of attack can be seen in subsection~\ref{AMA}. In such a scenario, AMD which aims to distinguish among different  attacks  is  a critical and necessary functionality. 
	\item \textbf{UAD}: Rather than relying on deterministic pilots, each ALU has to use random pilot as a temporal identifier to  represent itself, compelling Alice to redesign  UAD  such that the exact number and identities of ALUs can be identified.
	\item \textbf{TRP}: As previously talked, the traditional method of TRP cannot hold true under attack. In this paper, pilot signal from each  ALU is  encoded as a unique SAP. During uplink access, transmission of multiple pilots simultaneously makes SAPs overlap with each other. Alice needs to decode the observed SAPs to retrieve  pilots. The model of TRP can be seen in subsection~\ref{MTRP}. 
\end{enumerate}

% Encoding  SAPs  is configured at each of LUs and each of SAPs correspond to a unique random pilot value. In this way, pilots  are randomized  as  encoded SAPs and  Alice needs to decode SAPs observed to achieve TRP. 
It is required for Alice to jointly design  AMD,  UAD and TRP before channel estimation and data transmission. The  relationship among them can be seen Fig.~\ref{System_Framework_relation}. We assume that the joint design of AMD,  UAD and TRP occurs on time-frequency resource grid (TFRG)$\#$1 defined by a resource block containing $x$  subcarriers, $y$ OFDM  symbols and $z$ antennas with $\left( {x,y,z} \right)=\left( {N_{\rm E},m_{\rm E},N_{\rm T}} \right)$. The $N_{\rm E}$  subcarriers on TFRG $\#$1 are indexed by the set  ${\Psi }_{\rm E}$ with  $\left| {\Psi}_{\rm E}  \right|=N_{\rm E}$.  For the sake of easy calculation in the following subsections, we define ${\Psi _{\rm{E}}} \buildrel \Delta \over = \left\{ {1, \ldots ,{N_{\rm{E}}}} \right\}$. Channel estimation occurs on TFRG$\#$2  defined by a resource block satisfying  $\left( {x,y,z} \right)=\left({{N}_{\rm CE},m_{\rm E},N_{\rm T}} \right)$ where the ${N}_{\rm CE}$  subcarriers on TFRG $\#$2 are indexed by the set  ${\Psi }_{\rm CE }$ with  $\left| {\Psi}_{\rm CE} \right|={N}_{\rm CE}$.  Data transmission works on  TFRG$\#$3 defined by a resource block satisfying  $\left( {x,y,z} \right)=\left({{N}_{\rm D},m_{\rm D},N_{\rm T}} \right)$ where the $N_{\rm D}$  subcarriers on TFRG$\#$3 are indexed by the set ${\Psi }_{\rm D}$ with $\left| {\Psi}_{\rm D}  \right|=N_{\rm D}$.   There exist ${\Psi}_{\rm E} \cap {\Psi}_{\rm CE}  = \emptyset $ and ${\Psi}_{\rm E} \cap {\Psi}_{\rm D}  = \emptyset$.   $\Delta f$  and  $T_{\rm s}$ respectively denote the frequency-domain subcarrier spacing and  OFDM symbol time, both  following the choices  in 5G  NR numerologies~\cite{Park}.  The total latency $T$ during uplink access satisfies the following condition:
\begin{equation}
	T_{\rm con}\ge T = \left( {m_{\rm E}+ m_{\rm D}} \right)\times {T_{\rm s}} + {T_{\rm  extra}}
\end{equation}
where ${T_{\rm  extra}}$ denotes the time occupied by operations other than channel estimation and data transmission.  The maximum value $T_{\rm con}$ of total latency  is limited to  $T_{\rm con}=1\times 10^{-3}s$.

%Let us begin with the framework  of TRP  to achieve secure PA. 
\begin{figure}[!t]
	\centering \includegraphics[width=1.00\linewidth]{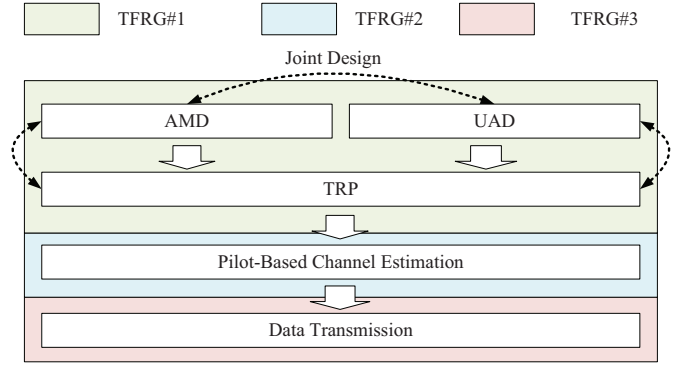}
	\caption{Relationship among AMD, UAD, TRP, channel estimation and data transmission. }
	\vspace{-15pt}
	\label{System_Framework_relation}
\end{figure}

\subsection{Random Pilot Signal Model}
We denote $x^{i}_{{\rm{L,}}m}\left[ k \right] $ and  $x^{i}_{{\rm{A}}}\left[ k \right] $ respectively as the pilot values for the $m$-th ALU and Ava at the $i$-th  subcarrier and $k$-th symbol time.  $x^{i}_{{\rm{L,}}m}\left[ k \right] $ satisfies  $x^{i}_{{\rm{L,}}m}\left[ k \right] = {x_{{\rm{L,}}m}}\left[ k \right] = \sqrt {{\rho _{{\rm{L,}}m}}} {e^{j{\phi _{k,m}}}},\forall i, i \in {\Psi_{\rm{E}} \cup {\Psi}_{\rm CE} }, {m \in {{\cal G}}}, {\phi}_{k, m} \in {\cal A} $ where ${\cal A}$ is a set  
satisfying $\left\{ {\phi :\phi  = {{2m\pi } \mathord{\left/
			{\vphantom {{2m\pi } C}} \right. 
			\kern-\nulldelimiterspace} C},0 \le m \le C - 1,C = \left| \cal A \right|} \right\}$.  SAP coding for TRP means how to encode  ${\phi}_{k, m}$ as SAPs able to be decoded by Alice, which can be seen in the  subsection~\ref{MTRP}. Note that ${x_{{\rm{L,}}m}}\left[ k \right] $ can be superimposed onto a dedicated pilot sequence having been optimized under a non-security oriented scenario. The new pilot sequence is then utilized for channel estimation. At this point, ${\phi}_{k, m}$ is an additional phase difference for security consideration.  We do not impose any prior constraints on $x^{i}_{{\rm{A}}}\left[ k \right] $ which satisfies  $x^{i}_{{\rm{A}}}\left[ k \right] = \sqrt {{\rho _{\rm{A}}}} {e^{j{\varphi _{k,i}}}},i \in {\Psi_{\rm{E}}}$. ${{\rho _{{\rm{L}},m}}}$ and ${{\rho _{\rm{A}}}}$ respectively denote the transmitting power of pilot signals from the $m$-th ALU and Ava. 

\begin{figure}[!t]
	\centering \includegraphics[width=1\linewidth]{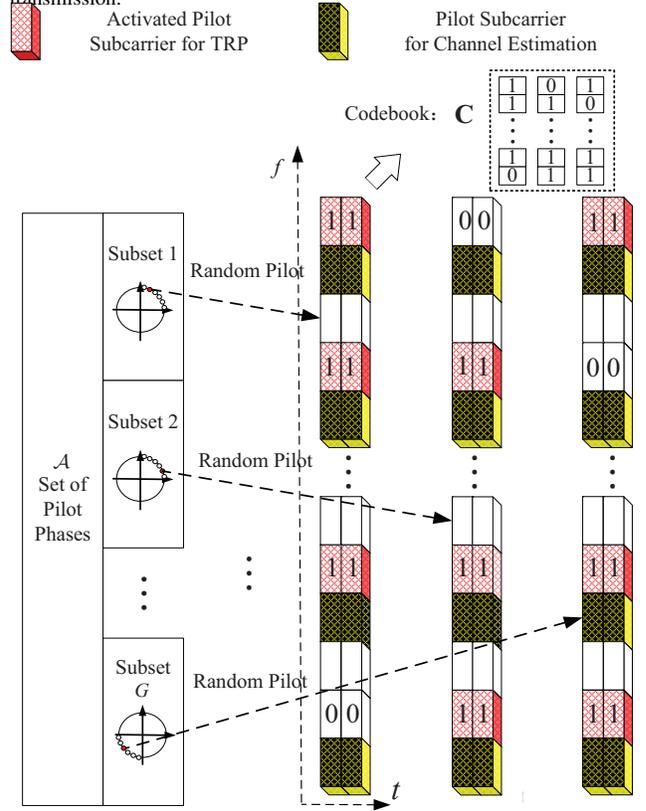}
	\caption{Illustration of pilot transmission through  SAP coding.}
	\vspace{-10pt}
	\label{Bloom_expressing}
\end{figure} 

\begin{figure}[!t]
	\centering \includegraphics[width=1\linewidth]{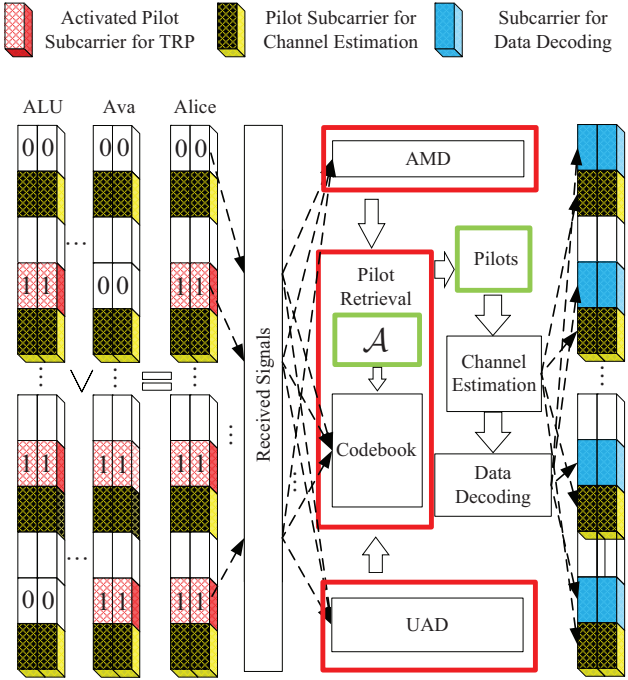}
	\caption{Illustration of  framework-level description of AMD, UAD, TRP, channel estimation and data decoding.}
	\vspace{-15pt}
	\label{Bloom_decoding11}
\end{figure} 
%Having defined RG, we now turn to how to perform Bloom filtering on multiple RGs  to generate various SAPs for PA. 
\subsection{Model of TRP}
\label{MTRP}
% \subsubsection{Superimposed Code Based TRP} 
The nonrandom superimposed code of length $N_{\rm E}$, size $C$ and order $K+1$ for encoding SAPs  is denoted by  a $N_{\rm E}\times C$ binary matrix  ${\bf{B}}{\rm{ = }}{\left[ {{{\bf b}^{j}}} \right]_{ {\rm{1}} \le j \le C}}$, where each codeword ${\bf{b}}^{j}{\rm{ = }}{\left[ {{b_{i,j}}} \right]_{{\rm{1}} \le i \le N_{\rm E}}}$ is a column vector.   On TFRG$\#$1, nonrandom superimposed code  based TRP requires that each of codewords in the codebook  be  mapped as a unique element in set $\cal A$.  Besides this, codebook partition and allocation  must be done prior to TRP. The codeword vectors in ${\bf{B}}$  are spilt into $G$ independent codeword clusters in which the $i$-th  cluster contains $\left[ {{C \mathord{\left/
			{\vphantom {C G}} \right.
			\kern-\nulldelimiterspace} G}} \right]$  codeword vectors indexed by  the set  ${\cal B}_{i}$. Correspondingly,   $\cal A$ is equally divided into $G$ subsets and we denote the $i$-th  subset by ${\cal A}_i$. The $i$-th codeword cluster  constitutes a sub-matrix denoted  by   $\left[ {{{\bf{b}}^{j \in {{\cal B}_{i}}}}} \right]$ which together with ${\cal A}_i$ are  exclusively allocated  to the $i$-th LU. 

When the $i$-th ALU in the set $\cal K$ attempts to use random pilot phase ${\phi}_{k, i}$  in ${\cal A}_i$ for TRP at the $k$-th OFDM symbol time, it will  select the corresponding  codeword ${{\bf{b}}_i}$  in the $i$-th sub-matrix  with  ${{\bf{b}}_i}\in \left[ {{{\bf{b}}^{j{'} \in {{\cal B}_i}}}} \right], i \in \cal K$ and further  generate its  intended SAP according to the codeword ${{\bf{b}}_i}, i\in \cal K$. 
The principle of generating SAPs via  ${{\bf{b}}_i}, i \in \cal K$ is that if the $j$-th digit of ${{\bf{b}}_i}, i \in \cal K$ is equal to 1, the pilot signal is inserted on the $j$-th subcarrier  and otherwise this subcarrier will be idle. 
According to this principle, $K$ ALUs would  independently express their pilot phases in the form of  SAPs encoded by $K$ codewords.  This process can be shown in Fig.~\ref{Bloom_expressing}.

Multiple SAPs from different ALUs, after undergoing wireless channels, would suffer from the superposition interference from each other. The codeword superposition is modelled  by:
\begin{equation}
	{{\bf{b}}_1} \vee  \cdots  \vee {{\bf{b}}_K}={{\bf{b}}_{{\rm{S}},K}}, {{\bf{b}}_{{\rm{S}},K}} \vee {\bf{a}}={{\bf{b}}_{\rm{I}}}
\end{equation}
where $ \vee $ denotes the boolean sum operation and  ${{\bf{b}}_i},1\le i\le K$ denotes the   codeword exploited by the $i$-th  ALU. ${\bf{a}}={\left[ {{a_i}} \right]_{1 \le i \le B}}$ denotes the codeword from Ava. The specific elements of ${\bf{a}}$ can be seen in Eq.~(4). ${{\bf{b}}_{{\rm{S}},K}}$ denotes the superposition version of codewords from ALUs and can be hardly acquired directly by Alice under attack.  ${{\bf{b}}_{\rm{I}}}$ denotes  the  codeword that corresponds to the SAP  observed by Alice. Note that  ${{\bf{b}}_{\rm{I}}}$ cannot be derived directly by Alice because what Alice can observe are solely signals on subcarriers.  Alice needs to retrieve the information of ${\phi}_{k, i}, i \in {\cal K}$ based  on the SAP.  As we can see,  the above mathematical model represents a complete process of TRP. The  framework-level description of AMD, UAD, TRP, channel estimation and data decoding is  shown in Fig.~\ref{Bloom_decoding11}.

\subsection{Attack Modelling and Analysis}
\label{AMA}
To destroy the process of pilot retrieval on multiple subcarriers,  Ava would launch a hybrid attack mode including:
\begin{enumerate}
	\item \textbf{Silence Cheating (SC):} Ava  keeps silence to misguide Alice since Alice cannot  recognize the non-existence of attacks.
	\item \textbf{Wide-Band Pilot (WB-PJ):} Ava  transmits its random signals on  the whole available subcarriers to  cover  the  codeword information of all ALUs. 
	\item \textbf{Partial-Band Pilot Jamming (PB-PJ):} Ava transmits its random signals on arbitrary  part of the whole subcarriers  such that the codeword decoded by Alice is imprecise.
\end{enumerate}
Under above model, the  elements of vector ${\bf{a}}$,  determined by  attack modes, can be given by:
\begin{equation}
	{\bf{a}} = \left\{ {\begin{array}{*{20}{c}}
			{{{\left[ {\begin{array}{*{20}{c}}
								0& \cdots &0
						\end{array}} \right]}^{\rm{T}}}}&{ \rm SC}\\
			{{{\left[ {\begin{array}{*{20}{c}}
								1& \cdots &1
						\end{array}} \right]}^{\rm{T}}}}&{\rm WB\text{-}PJ}\\
			{{{\left[ {\begin{array}{*{20}{c}}
								0& \cdots &1
						\end{array}} \right]}^{\rm{T}}}}&{\rm PB\text{-}PJ}
	\end{array}} \right.
\end{equation}
Note that Ava does not know the codeword used by ALU of interest since codewords from ALUs are randomly selected. Furthermore,  the distribution of above attack modes is  unknown to Alice and LUs. Therefore,  the coding principle of SAP at Ava  is to use random coding on SAPs.  By examining Eq. (3) and Eq. (4), we can  easily test  the ability of nonrandom superimposed code  in resisting attack. For example, let us consider three codewords from  ALUs, i.e., $ \left[ {\begin{array}{*{20}{c}}
		1&0&0 &1
\end{array}} \right]$, $ \left[ {\begin{array}{*{20}{c}}
		0&0&1 &1
\end{array}} \right]$, $ \left[ {\begin{array}{*{20}{c}}
		1&0&1 &0
\end{array}} \right]$ and  a  codeword from Ava, i.e., $\left[ {\begin{array}{*{20}{c}}
		1&1&0 &0
\end{array}} \right]$.  Since signal energy from Ava  is dispersed on pilot subcarriers, what  Alice could observe is a WB-PJ attack, which indicates no any useful information and imposes huge confusion  on SAP decoding. 

\subsection{Pilot Based Channel Estimation}
Consider the basic OFDM procedure on TFRG$\#$2. Pilot tone vectors  of ALUs and Ava over $N_{\rm E}$ subcarriers are respectively stacked as $N_{\rm E}$ by 1 vector ${{\bf{x}}_{{\rm{L,}}m}}\left[ k \right] = \left[ {{x^{j}_{{\rm{L,}}m}}\left[ k \right]} \right]^{\rm T}_{j \in {\Psi}_{\rm CE} }$ and  ${{\bf{x}}_{{\rm{A}}}}\left[ k \right] = \left[ {{x^{j}_{{\rm{A}}}}\left[ k \right]} \right]^{\rm T}_{j \in {\Psi}_{\rm CE} }$. We consider using  orthogonal pilots, i.e., $ {{\bf{x}}_{{\rm{L}},m}}{\bf{x}}_{{\rm{L}},n}^ + = 0,\forall m \ne n$.
The length of cyclic prefix is assumed to be larger than the maximum number $L$ of channel taps. The parallel streams, i.e., ${{\bf{x}}_{{\rm{L,}}m}}\left[ k \right] $ and ${{\bf{x}}_{{\rm{A}}}}\left[ k \right]$ are modulated with inverse fast Fourier transform. Then the time-domain  $N_{\rm E}$ by 1 vector  ${{\bf{y}}^i}\left[ k \right]$, derived by Alice after removing the cyclic prefix at the $i$-th receiving antenna, can be written as:
\begin{equation}\label{E.1}
	{{\bf{y}}^i}\left[ k \right] = \sum\limits_{m \in {\cal K}}^{} {{\bf{H}}_{{\rm{C,}}m}^i{{\bf{F}}^{\rm{H}}}{{\bf{x}}_{{\rm{L,}}m}}\left[ k \right]}  + {\bf{H}}_{{\rm{C,A}}}^i{{\bf{F}}^{\rm{H}}}{{\bf{x}}_{\rm{A}}}\left[ k \right] + {{\bf{v}}^i}\left[ k \right]
\end{equation}
Here,  ${\bf{H}}_{{\rm{C}},m}^i$  is the $N_{\rm CE} \times N_{\rm CE}$ circulant matrices of the $m$-th ALU, with the  first column  given by ${\left[ {\begin{array}{*{20}{c}}
			{{\bf{h}}_{{\rm{L,}}m}^{{i^{\rm{T}}}}}&{{{\bf{0}}_{1 \times \left( {N_{\rm CE}- L} \right)}}}
	\end{array}} \right]^{\rm{T}}}$.
${\bf{H}}_{{\rm{C,A}}}^i$ is a $N_{\rm CE} \times N_{\rm CE}$ circulant matrix with the first column given by ${\left[ {\begin{array}{*{20}{c}}
			{{\bf{h}}_{\rm{A}}^{{i^{\rm{T}}}}}&{{{\bf{0}}_{1 \times \left( {N_{\rm CE} - L} \right)}}}
	\end{array}} \right]^{\rm{T}}}$ and ${\bf{h}}_{\rm{A}}^i $ is  assumed to be independent with  ${\bf{h}}_{{\rm L},m}^i, \forall m \in \cal{K}$. ${\bf{h}}_{{\rm L},m}^i$ and ${\bf{h}}_{{\rm A}}^i$  denote the $L \times 1$ channel impulse response vectors, respectively from the $m$-th ALU and Ava to the $i$-th receiving antenna of Alice.  $L$ denotes the length of channel taps.  ${\bf{F}} \in {{\mathbb C}^{N_{\rm CE} \times N_{\rm CE}}}$ denotes the discrete Fourier transform matrix. ${{\bf{v}}^i}\left[ k \right]\sim {\cal CN}\left( {0,{\sigma ^2}{{\bf{I}}_{N_{\rm CE}}}} \right)$ denotes the noise vector on time domain at the $i$-th antenna of Alice within the $k$-th symbol time. ${\sigma ^2}$ is the average noise power of Alice. Taking fast Fourier transform, Alice finally derives the frequency-domain $N_{\rm E}$ by 1 signal vector at the $i$-th receiving antenna as:
\begin{equation}\label{Eq.3_3}
	{{\widehat{\bf{y}}}^i}\left[ k \right]= \sum\limits_{j \in {\cal K}}^{} {{{\bf{F}}_{\rm{L}}}{\bf{h}}_{{\rm{L}},j}^i{x_{{\rm{L}},j}}\left[ k \right]}  +
	{\bf I}^{i}\left[ k \right]
\end{equation}
where  ${{\bf{F}}_{\rm{L}}} = \sqrt {{N_{\rm{CE}}}} {\bf{F}}_{{\rm{*}}}$. ${\bf{F}}_{{\rm{*}}}$ denotes the first $L$ columns of  ${\bf{F}}$.  $ {\bf I}^{i}\left[ k \right]$ satisfies: 
\begin{equation}
	{\bf I}^{i}\left[ k \right]={\rm{Diag}}\left\{ {{{\bf{x}}_{\rm{A}}}\left[ k \right]} \right\}{{\bf{F}}_{\rm{L}}}{\bf{h}}_{\rm{A}}^i + {{\bf{w}}^i}\left[ k \right]
\end{equation}
where ${{\bf{w}}^i}\left[ k \right] = {\bf{F}}{{\bf{v}}^i}\left[ k \right]$.
Stacking ${\widehat {\bf{y}}^i}\left[ k \right]$ within $K$ OFDM symbol time, we can rewrite  signal model in Eq.~(\ref{Eq.3_3}) as:
\begin{equation}
	{{\bf{Y}}}_{{\rm{}}}^i = \sum\limits_{j \in {\cal K}}^{} {{{\bf{F}}_{\rm{L}}}{\bf{h}}_{{\rm{L}},j}^i{{\bf{x}}_{{\rm{L}},j}}}  +
	{\bf I}^{i}
\end{equation}
where the $N_{\rm CE} \times K$   matrix ${\bf{Y}}_{{}}^i$ satisfies ${\bf{Y}}_{{\rm{}}}^i = \left[ {\widehat {\bf{y}}^i{{\left[ k \right]}_{ k \in \cal K}}} \right]$ and ${\bf I}^{i} $ satisfies ${\bf{I}}_{{\rm{}}}^i = \left[ { {\bf{I}}^i{{\left[ k \right]}_{ k \in \cal K}}} \right]$.
The $1 \times K$ vector ${{\bf{x}}_{{\rm{L}},m}}$ satisfies ${{\bf{x}}_{{\rm{L}},m}}{\rm{ = }}\left[ {{{x}_{{\rm{L}},m}}{{\left[ k \right]}_{ k \in \cal K}}} \right]$. 

Provided that TRP can be protected well, the random pilot of interest, i.e., $m$-th one,  can be known  by Alice without being affected by Ava. A least square  estimation of ${{{\bf{h}}^{i}_{{\rm L},m}}}$, denoted by ${{{\widehat{\bf{h}}}^{i}_{{\rm L},m}}} $ can be given by:
\begin{equation}\label{E.4_4}
	{{{\widehat{\bf{h}}}^{i}_{{\rm L},m}}} = \left\{ {\begin{array}{*{20}{c}}
			{{\bf{h}}_{{\rm{L}},1}^i + {{\left(  {{\bf{F}}_{\rm{L}}} \right)}^ + } {{\bf{I}}^i}{{\left( {{{\bf{x}}_{{\rm{L}},1}}} \right)}^ + }}&{if\, m=1}\\
			{{\bf{h}}_{{\rm{L}},2}^i  +  {{\left(  {{\bf{F}}_{\rm{L}}} \right)}^ + }{{\bf{I}}^i}{{\left( {{{\bf{x}}_{{\rm{L}},2}}} \right)}^ + }}&{if\, m=2}\\
			\vdots & \vdots \\
			{{\bf{h}}_{{\rm{L}},K}^i +  {{\left(  {{\bf{F}}_{\rm{L}}} \right)}^ + }{{\bf{I}}^i}{{\left( {{{\bf{x}}_{{\rm{L}},K}}} \right)}^ + }}&{if\, m=K}
	\end{array}} \right.
\end{equation}
where $\left( {\cdot} \right)^+ $ is  the Moore-Penrose pseudoinverse.   

\begin{figure*}
	\begin{equation}\setcounter{equation}{13}
		\label{eq_17}
		\frac{1}{{{N_{\text{T}}}}}\widehat {\mathbf{g}}_{j,{m}}^{\text{H}}{{\mathbf{y}}_{{\rm d},j}}\left[ k \right] = \frac{{{d_{{\text{L,}}{m}}}\left[ k \right]}}{{{N_{\text{T}}}}}\sum\limits_{i = 1}^{{N_{\text{T}}}} {\widehat g_{j,{m},i}^*{g_{j,{m},i}}}  + \frac{{{d_{{\text{L,}}{m}}}\left[ k \right]}}{{{N_{\text{T}}}}}\sum\limits_{{p\in {\cal K}},p \ne {m}}^{} {\sum\limits_{i = 1}^{{N_{\text{T}}}} {\widehat g_{j,{m},i}^*{g_{j,p,i}}} }  + \frac{1}{{{N_{\text{T}}}}}\sum\limits_{i = 1}^{{N_{\text{T}}}} {\widehat g_{j,{m},i}^*{w_{j,i}}\left[ k \right]} 
		\vspace{-10pt}
	\end{equation}
\end{figure*}
\subsection{Data Transmission}
Without loss of generality, we assume: 1) $K$  ALUs   share the same  subcarriers at the same OFDM symbols within TFRG$\#$3 for short packet transmission;  2) $K$  ALUs operate at the same rate $R$ which can be calculated as  $R = \frac{B}{{{m_{\text{D}}}{T_{\text{s}}}{N_{\text{D}}}\Delta f}}$; 3) Alice employs  matched filter over $N_{\rm T}$ antennas  on each subcarrier.
Under above assumptions, the  receiving  signal model of Alice at the $j$-th subcarrier within the $k$-th OFDM symbol is denoted by ${{\bf{y}}_{{\rm d},j}}\left[ k \right]$, satisfying
\begin{equation}\label{Eq.6_6}\setcounter{equation}{10}
	{{\bf{y}}_{{\rm d},j}}\left[ k \right] = \sum\limits_{m \in {\cal K}}^{} {{{\bf{g}}_{j,m}}{d_{{\rm{L}},m}}\left[ k \right]}  + {{\bf{w}}_j}\left[ k \right]
\end{equation}
where ${{d_{{\rm{L}},m}}\left[ k \right]}, m \in {\cal K}$ denotes the data symbol transmitted by the $m$-th ALU at the $k$-th OFDM symbol  and satisfies ${\mathbb E}\left[ {{{\left| {{d_{{\rm{L}},m}}\left[ k \right]} \right|}^2}} \right] = \gamma $ where $\gamma $ is the instantaneous signal-to-noise-ratio (SNR). ${{\bf{w}}_j}\left[ k \right] = {\left[ {{w_{j,i}}\left[ k \right]} \right]_{1 \le i \le {N_{\rm{T}}}}}, j \in \Psi_{\rm D}$  denotes the noise vector at the $j$-th subcarrier within  $k$-th OFDM symbol and satisfies 
${{\bf{w}}_j}\left[ k \right]\sim {\cal CN}\left( {0,{{\bf{I}}_{{N_{\rm{T}}}}}} \right)$.  ${{\bf{g}}_{j,m}} = {\left[ {{g_{j,m,i}}} \right]_{1 \le i \le {N_{\rm{T}}}}}, j \in \Psi_{\rm D}, m \in {\cal K}$ denotes the $j$-th subcarrier channel vector  stacked  by the $m$-th ALU across $N_{\rm T}$ antennas, and satisfies: 
\begin{equation}
	{{\mathbf{g}}_{j,m}} = {\left[ {\begin{array}{*{20}{c}}
				{{{\mathbf{F}}_{{\text{L}},j}}{{\mathbf{h}}}_{{\text{L}},m}^1}& \cdots &{{{\mathbf{F}}_{{\text{L}},j}}{{\mathbf{h}}}_{{\text{L}},m}^{{N_{\text{T}}}}} 
		\end{array}} \right]^{\text{T}}}
\end{equation} 
where ${{\mathbf{F}}_{{\text{L}},j}}$ denotes the $j$-th row of ${{\mathbf{F}}_{{\text{L}}}}$.  We define ${{{\widehat{\bf{h}}}^{i}_{{\rm L},m}}}$ as the estimated version of ${{{{\bf{h}}}^{i}_{{\rm L},m}}}$ and can  derive the estimated version of ${{{\mathbf{g}}}_{j,m}}$  as  ${{\widehat{\mathbf{g}}}_{j,m}}$,  satisfying:
\begin{equation}
	{{\widehat{\mathbf{g}}}_{j,m}} =  {\left[ {{{\widehat g}_{j,m,i}}} \right]_{1 \le i \le {N_{\rm{T}}}}}= {\left[ {\begin{array}{*{20}{c}}
				{{{\mathbf{F}}_{{\text{L}},j}}{\widehat{\mathbf{h}}}_{{\text{L}},m}^1}& \cdots &{{{\mathbf{F}}_{{\text{L}},j}}{\widehat{\mathbf{h}}}_{{\text{L}},m}^{{N_{\text{T}}}}} 
		\end{array}} \right]^{\text{T}}}
\end{equation} 
Based on ${{\widehat{\mathbf{g}}}_{j,m}}$, Alice derives the matched filter as  $\frac{1}{{{N_{\text{T}}}}}\widehat {\mathbf{g}}_{j,m}^{\text{H}}$ which is then applied on Eq.~(\ref{Eq.6_6}) to decode  the data symbol of $m$-th ALU.  Receiving signal  weighted by $\frac{1}{{{N_{\text{T}}}}}\widehat {\mathbf{g}}_{j,m}^{\text{H}}$ can be expressed in Eq.~(\ref{eq_17}).  We assume  ${\widehat {\mathbf{g}}_{j,{m}}} = {{\mathbf{g}}_{j,{m}}}$ when  no estimation errors exist, and otherwise we assume ${\widehat {\mathbf{g}}_{j,{m}}} = \left( {1 - \lambda } \right){{\mathbf{g}}_{j,{m}}} - \lambda {\widetilde {\mathbf{g}}_{j,{m}}},0 < \lambda  < 1$ where  ${\widetilde {\mathbf{g}}_{j,{m}}}\sim {\cal CN}\left( {0,{{\bf{I}}_{N_{\rm T}}}} \right)$ is independent with ${\widehat {\mathbf{g}}_{j,{m}}}$ and larger $\lambda $ means that the estimation gets worse.

\section{QLN for Secure Uplink Access Under Attack}
\label{QLNSDUA}
In this section, we concentrate  on the joint design of  AMD, UAD and TRP through  a five-layer QLN. The QLN includes a initial layer, input layer, hidden layer, quantum learning  layer and output layer.  Details on the design of each layer are given in Fig.~\ref{Quantum_learning} and discussed in the following subsections.

The design principle of layers in QLN depends on the processing procedures for decoding SAP. Since what Alice can directly observe are solely signals on subcarriers that follow an unexpected SAP, decoding the SAP needs to achieve a signal-feature based transformation from signals to binary code information, a complex and multi-step process following a fixed order. Hence it is natural to adopt multiple layers, each layer responsible for one specific functionality. QLN uses different layers to abstract increasingly complex procedures, i.e., an initial layer is to model signal information, an input layer to extract signal feature, an hidden layer to deal with features etc. This often leads to better generalization ability towards future access design in 5G and Beyond. Besides, this set-up can help gNB easily learn more detailed functionalities within each layer and more abstract relationships among different layers. It is theoretically possible to represent any possible functionalities with a single layer QLN. In order to achieve single layer QLN, it is required to change the method of SAP coding correspondingly, which could be a future research direction. Universal approximation which states that the least layer required to approximate any SAP decoding  procedures can be also researched in the future.  In this paper, we will provide an easy-to-implement and initial version of QLN, for the sake of making clear the basic design principle in this area. 

\begin{figure}[!t]
	\centering \includegraphics[width=1.0\linewidth]{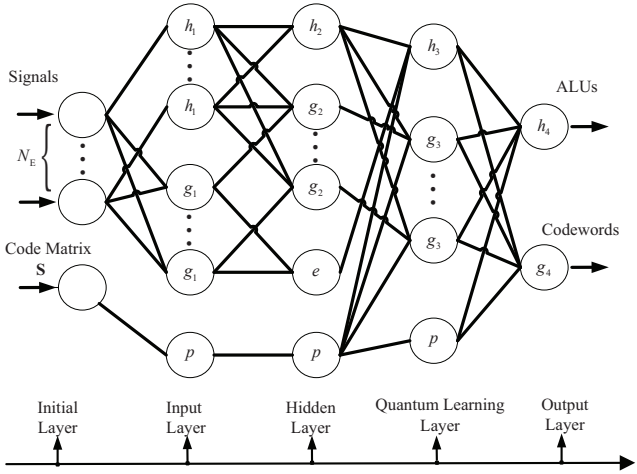}
	\caption{Illustration of a five-layer QLN.}
	\vspace{-15pt}
	\label{Quantum_learning}
\end{figure} 
\subsection{Initial Layer }
The functionality of initial layer is to provide data information required by the network and to give a mathematical model describing those data well. This layer is constituted by ${N_{\rm E}}+1$ initial nodes among which ${N_{\rm E}}$ initial nodes are responsible for collecting and modelling wireless signals on resource block  and another node stores  all the available codebook matrices related to the nonrandom superimposed code. 
\subsubsection{Data  on the First ${N_{\rm E}}$ Initial Nodes}
The encoding principle  of SAPs determines  that  ${N_{\rm E}}$ subcarriers are expected to be randomly occupied by random pilot signals. When $K$ ALUs attempt to access, we define the possible number of signals on the $i$-th subcarrier as $c_{i}$. Then the signals within the $k$-th OFDM symbol on the $i$-th initial node of  this layer can be denoted by ${ {\mathbf{y}} _i}\left[ k \right]$, satisfying:
\begin{equation}\setcounter{equation}{14}
	{ {\mathbf{y}} _i}\left[ k \right] = \left\{ {\begin{array}{*{20}{c}}
			{{{{\mathbf{w}} }_i}\left[ k \right]} \\ 
			{{\mathbf{g}} _{{\text{E}},i}^{}{n_i}\left[ k \right] + {{{\mathbf{w}} }_i}\left[ k \right]} \\ 
			{\sum\limits_{j = 1}^{c_{i}} {{\mathbf{g}} _{{\text{}}j,m}^{}{x_{{\text{L}},j}}\left[ k \right] + } {{ {\mathbf{w}} }_i}\left[ k \right]} \\ 
			{\sum\limits_{j = 1}^{c_{i}} { {\mathbf{g}} _{{\text{}}j,m}^{}{x_{{\text{L}},j}}\left[ k \right] +  {\mathbf{g}} _{{\text{E}},i}^{}{n_i}\left[ k \right] + } {{ {\mathbf{w}} }_i}\left[ k \right]} 
	\end{array}} \right.
\end{equation}
where 
${{\mathbf{g}} _{{\text{E}},i}^{}}$ denotes  the $N_{\rm E} \times 1$  channel vector from  Ava to Alice at the $i$-th pilot subcarrier  across $N_{\rm T}$ antennas, and satisfies:
\begin{equation}
	{{\mathbf{g}}_{{\rm E},j}} = {\left[ {\begin{array}{*{20}{c}}
				{{{\mathbf{F}}_{{\text{L}},j}}{{\mathbf{h}}}_{{\text{A}}}^1}& \cdots &{{{\mathbf{F}}_{{\text{L}},j}}{{\mathbf{h}}}_{{\text{A}}}^{{N_{\text{T}}}}} 
		\end{array}} \right]^{\text{T}}}
\end{equation}
${{n_i}\left[ k \right]}$ denotes the interfering  signal from Ava at the $i$-th pilot subcarrier within the $k$-th OFDM symbol. 

Let us explain the Eq.~(14) in details.  ${ {\mathbf{y}} _i}\left[ k \right]={{\bf{w}}_i}\left[ k \right] $ holds true when no pilot signals arrive; If interfering signals from Eva are received by Alice, there will be ${{\mathbf{y}} _i}\left[ k \right] = {\mathbf{g}} _{{\text{E}},i}^{}{n_i}\left[ k \right] + {{\mathbf{w}} _i}\left[ k \right]$; Otherwise if  pilot signals from ALUs arrive at Alice, ${{\mathbf{y}} _i}\left[ k \right] = \sum\limits_{j = 1}^{{}} { {\mathbf{g}} _{{\text{L}},j,i}^{}{x_{{\text{L}},j}}\left[ k \right] + } {{\mathbf{w}} _i}\left[ k \right]$ holds true; Finally, if  signals from both ALUs and Eva coexist, there will be ${ {\mathbf{y}} _i}\left[ k \right] = \sum\limits_{j = 1}^{} { {\mathbf{g}} _{{\text{L}},j,i}^{}{x_{{\text{L}},j}}\left[ k \right] +  {\mathbf{g}} _{{\text{E}},i}^{}{n_i}\left[ k \right] + } { {\mathbf{w}} _i}\left[ k \right]$. 

By stacking signals across $M$ OFDM symbols at the $i$-th pilot subcarrier, Alice derives the data captured by  the  $i$-th initial node  as a $N_{\rm T} \times M$ matrix ${{\bf{Y}}_i}$, given by:
\begin{equation}
	{{\bf{Y}}_i}{\rm{ = }}\left[ {\begin{array}{*{20}{c}}
			{{{\bf{y}}_1}\left[ k \right]}&{{{\bf{y}}_2}\left[ k \right]}& \cdots &{{{\bf{y}}_M}\left[ k \right]}
	\end{array}} \right]
\end{equation}
where the parameter $M, M >  m_{\rm E}$ is unknown now and will be given in the following section, depending on specific  design requirements. 
\subsubsection{Data on the $\left( {{N_{\rm{E}}} + 1} \right)$-th Initial Node}
We  define the ${B \times \left( {\begin{array}{*{20}{c}}
			C\\
			k
	\end{array}} \right)}$  matrix  ${{\bf{B}}_k}, k = 2, 3,..., K+1$  as the collection of all of the boolean sums of  codewords from $\bf B$, taken exactly $k$ at a time.   Each column vector of ${{\bf{B}}_k}$   represents a unique  codeword.
The  $\left( {{N_{\rm{E}}} + 1} \right)$-th node is responsible for storing the stacked matrix of superimposed code, denoted by ${\bf{S}}$, satisfying:
\begin{equation}
	{\bf{S}} = \left[ {\begin{array}{*{20}{c}}
			{\bf{B}}&{{{\bf{B}}_2}}& \cdots &{{{\bf{B}}_{K + 1}}}
	\end{array}} \right]
\end{equation}

As we can see, the final outputs of initial layer are ${{\bf{Y}}_i}, i \in {\Psi}_{\rm E}$ and $\bf S$.

\subsection{Input Layer}
The input layer  is responsible for signal feature extraction which provides  quantized signal  features necessary to SAP decoding. Three types of nodes  for achieving this functionality are prepared, including energy feature extraction (EFE) nodes, independence feature extraction (IFE) nodes and code storage (CS) node. 
\subsubsection{EFE Nodes} There exist a total  of $N_{\rm E}$ EFE nodes,  each node performing signal energy detection on the corresponding subcarrier so as to accurately determine the number of signals  on the subcarrier.

In order to perform signal energy detection  on the $i$-th pilot subcarrier, Alice needs to make the $i$-th EFE node  generate a normalized covariance matrix ${\widehat {{\bf{R}}}_i}$ that satisfies ${\widehat {{\bf{R}}}_i} = \frac{1}{{{\sigma ^2}}}{\bf{Y}}_{ i}{{{\bf{Y}}_{i}}^{\rm{H}}}$. The ordered eigenvalues of ${\widehat {{\bf{R}}}_i}$ are given  by  ${\lambda _{{M}}} >  \ldots  > {\lambda _1} > 0$. After that, Alice  constructs $M$ test statistics  as $T_{k} = \frac{{{\lambda _k}}}{{{\lambda _{1}}}}\mathop {\gtrless}\limits_{{{{\overline{\cal H}}_k}}}^{{{{{ {\cal H}}_k}}}} {\gamma}_k, 2\le k\le M$ where ${\gamma}_k$ denotes the  decision threshold of the $k$-th test statistic.  The hypothesis ${{ {\cal H}}_k}$  means $\left| {M+1- k} \right|$ signals coexist and  ${{\overline{\cal H}}_{k}}$ means the opposite. If Alice hopes to recognize $\left| {M+1- k} \right|, 3\le k \le M$  signals  precisely, both ${{ {\cal H}}_k}$  and ${{\overline{\cal H}}_{k-1}}$ should hold true. Otherwise if  Alice hopes to recognize $\left| {M-1} \right|$  signals, ${{ {\cal H}}_2}$ should hold true. Based on $M$  eigenvalues,  Alice cannot recognize $M$ signals.  
In fact, Alice needs to detect at most $K+1$ signals including $K$ signals from ALUs and one signal from Ava on each subcarrier. This requires $M= K+2$ such that the eigenvalue space of ${\widehat {{\bf{R}}}_i}$  is enough to capture  features of all possible $K+1$ signals.

For $T_{k}$,  the larger ${\lambda}_k$ is, the larger threshold  ${\gamma}_k$ is required to make precise decision. To detect all possible $K+1$ signals without false alarm, ${\gamma}_{K+2}$ is  utilized as the common  threshold for all other detectors, i.e.,  $T_{k}, 1 \le i\le K+1$.  
Given ${\widehat {{\bf{R}}}_i}$,  Eq. (49) in~\cite{Kobeissi}  provides  a   threshold  function $f\left( {{N_{\rm{T}}},K,{P_f}} \right)$  to  measure what level  the probability of false alarm, i. e.,  ${P_f}$, can achieve.    We should note that this function is not only a monotone decreasing  function of  ${P_f}$ but also a monotone increasing  function of $K$. For a given probability constraint ${\varepsilon}$, a lower bound $\gamma \left( {{\varepsilon}} \right)$ could be always expected with  $\gamma \left( {{\varepsilon}} \right) = f\left( {{N_{\rm{T}}},K,{\varepsilon}} \right)$. Under this equation,  zero ${\varepsilon}$ can be achieved by flexibly configuring  $\gamma \left( {{\varepsilon}} \right)$ and  $\gamma \left( {{0}} \right)$ can be determined exactly~\cite{Kobeissi}.  This phenomenon can be experimentally confirmed in Fig.~\ref{Detection} where  $N_{\rm T}$ is  fixed to be $128$ and $K$  is respectively configured as 12, 16, and 20.  Basically, this phenomenon originates from the well-known Marcenko-Pastur Law~\cite{Hoydis}. In this paper, we let ${\gamma}_{K+2} \buildrel \Delta \over = \gamma \left( {{0}} \right)$.

Based on above method for establishing the decision threshold, Alice constructs $K+2$ test statistics  as $T_{k} = \frac{{{\lambda _k}}}{{{\lambda _{1}}}}\mathop {\gtrless}\limits_{{{{\overline{\cal H}}_k}}}^{{{{{ {\cal H}}_k}}}} \gamma \left( {{0}} \right), 2\le k\le K+2$. After testing above statistics, Alice can derive the number of signals at the $i$-th subcarrier as $M_{i} $.  Finally, we rewrite the function of this EFE node as a function $h_1$ with the output ${M_i}$, satisfying:
\begin{equation}
	{M_i} = h_1\left( {\widehat {{\bf{R}}}_i}, {f\left( {{N_{\rm{T}}},K,0} \right)} \right)
\end{equation}

\subsubsection{IFE Nodes} There are a total of $N_{\rm E}$ IFE nodes and each IFE node performs inner-product  operation across all subcarriers so as to accurately capture  the  independence features embedded in pilot signals from ALUs and Ava.  

Our treatment of  independence characteristic of receiving signals  will refer to two aspects, that is,  wireless channels and random pilot signals. Particularly,  it is well known that wireless channels from different ALUs  are independent with each other due to the inherent constraints of antenna spacing. Thanks to  the  transparency transmission set-up in the uplink in current 3GPP specification, pilot signals  from different ALUs  are also independent with each other. 

In order to quantize above  independence features having embedded on the pilot subcarriers,  IFE nodes need to perform the mutual inner-product  operation among pilot signals at different frequency-domain positions. 	The inner-product  operation between  pilot signals at the $i$-th pilot subcarrier and those at the $j$-th pilot subcarrier can be  derived as:
\begin{equation}
	{d_{i,j}} = g_1\left( {{I_{i,j}}} \right), {I_{i,j}} = \left\langle {\frac{{{{ {\mathbf{y}} }_i}\left[ k \right]}}{{\left\| {{{ {\mathbf{y}} }_i}\left[ k \right]} \right\|}},\frac{{{{ {\mathbf{y}} }_j}\left[ k \right]}}{{\left\| {{{ {\mathbf{y}} }_j}\left[ k \right]} \right\|}}} \right\rangle, \forall k 
\end{equation}
where $ {d_{i,j}}$ denotes the differential code digit and $\left\langle {\cdot} \right\rangle $ denotes  inner product operation.  
$g_1\left( x \right) $ represents  the  encoder satisfying $g_1\left( x \right) = \left\{ {\begin{array}{*{20}{c}}
		0&{x \le \zeta  }\\
		1&{x > \zeta  }
\end{array}} \right.$ where   $\zeta $ denotes the  decision  threshold.  The value of ${ {\mathbf{y}} _i}\left[ k \right]$ at the $i$-th pilot subcarrier  can be seen in Eq.~(14).   To identify the principle of designing  $\zeta $
, we give the following interpretation. According to law of large numbers, the inner product between signals from two independent individuals approaches zero. On the contrary, the inner product between signals from the same node can reach a value with its amplitude equal to one. In theory, the value of  $\zeta $
can thus be configured to be a certain value, i.e., 0.5.

\begin{algorithm}[!t]
	\caption{Codeword Generation}
	\begin{algorithmic}[1]
		\label{Alg1}
		\FOR {$i=1$ to $i=N_{\rm E}$}
		\STATE The   inner product is performed  between  receiving  signals at  the  $i$-th pilot subcarrier  and receiving  signals across all ${N_{\rm E}}$  pilot subcarriers.   A code vector of length ${N_{\rm E}}$ is formulated.
		\STATE  Perform XOR operation between each digit of the code vector  and the  digit (usually 1 if signals exist, otherwise 0.) at the  $i$-th pilot subcarrier. The result is   ${{\bf{d}}_i}$.
		\ENDFOR
	\end{algorithmic}
\end{algorithm}
On this basis,  the independence features having embedded on  the $i$-th pilot subcarrier can be extracted as the binary code vector ${{\bf{d}}_i}$. See more details about the extraction process in Algorithm~\ref{Alg1}. Based on above algorithm, we define the function of the $i$-th IFE node as $ {g_1}$,  satisfying:
\begin{equation}
	\left[ {\begin{array}{*{20}{c}}
			i&{{{\bf{d}}^{\rm T}_i}}
	\end{array}} \right] = {g_1}\left( {{{\bf{Y}}_i}} \right), i = 1, \cdots ,{N_{\rm{E}}}
\end{equation} 
where ${{{\bf{Y}}_i}}$ and $\left[ {\begin{array}{*{20}{c}}
		i&{{{\bf{d}}^{\rm T}_i}}
\end{array}} \right]$ respectively denote the input and output of the $i$-th IFE node. 
\subsubsection{CS Node}
CS node is responsible for storing the data from the $\left( {{N_{\rm{E}}} + 1} \right)$-th initial node. We define the function of this node as function $p$, satisfying 
\begin{equation}
	{\bf{S}} = p\left( {\bf{S}} \right)
\end{equation}
\begin{figure}[!t]
	\centering \includegraphics[width=1\linewidth]{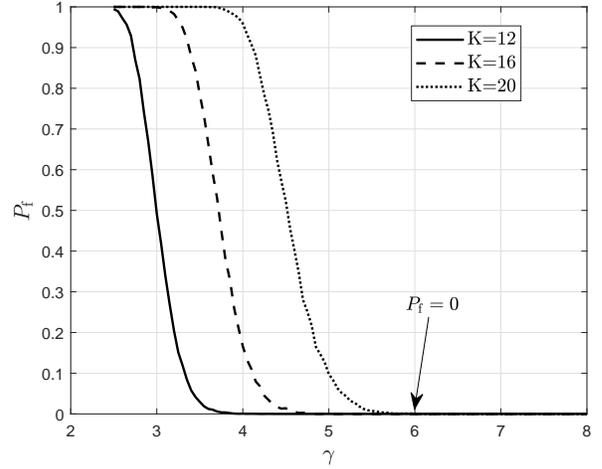}
	\caption{ ${P_f}$ versus  $\gamma$ under various  $K$ on an arbitrary  single subcarrier. }	
	\label{Detection}
	\vspace{-15pt}
\end{figure}
\subsection{Hidden Layer}
This layer is responsible for  dealing with those  quantized features having been derived in the previous layer. This  is done through three types of nodes, i.e., energy feature processing (EFP) node, independence feature processing (IFP) nodes and CS node. 

The function of  EFP node, denoted by $h_2$, is to encode the  detected number of signals  on $N_{\rm E}$ subcarriers  and output   a multivariate codeword vector $[\begin{array}{*{20}{c}}
	{{\bf{m}}_{\rm{I}}^T}&{{\bf{b}}_{\rm{I}}^T}
\end{array}]$, given by:
\begin{equation}
	[\begin{array}{*{20}{c}}
		{{\bf{m}}_{\rm{I}}^T}&{{\bf{b}}_{\rm{I}}^T}
	\end{array}] = {h_2}\left( {{M_1}, \cdots ,{M_{{N_{\rm{E}}}}}} \right)
\end{equation}
where 
\begin{equation}
	{{\bf{m}}_{\rm{I}}}{\rm{ = }}{\left[ {\begin{array}{*{20}{c}}
				{{M_1}}& \cdots &{{M_{{N_{\rm{E}}}}}}
		\end{array}} \right]^T}
\end{equation}	
and
\begin{equation}
	{{\bf{b}}_{\rm{I}}}{\rm{ = }}{\left[ {\begin{array}{*{20}{c}}
				{{1_{\left\{ {x\left| {x \ge 1} \right.} \right\}}}\left( {x = {M_1}} \right)}& \cdots &{{1_{\left\{ {x\left| {x \ge 1} \right.} \right\}}}\left( {x = {M_{{N_{\rm{E}}}}}} \right)}
		\end{array}} \right]^T}
\end{equation}		
where ${1_{\left\{ {x \ge 1} \right\}}}\left( x \right)$ is the indicator function. 

The IFP nodes have two functionalities. For the first functionality, ${N_{\rm{E}}}$ IFP nodes generate  ${{\bf{b}}_{\rm{I}}}$ according to the Eq. (24)  and then make the $j$-digit of  ${{ {\bf{b}}}_{{\rm{I}}}}$  zero to derive a  codeword denoted by ${\bf{a}} \left( {{j}} \right)$. The overall  functionality is denoted by  a function $g_2$, satisfying
\begin{equation}
	{\bf{a}}\left( j \right) = {g_2}\left( {{M_1}, \cdots ,{M_{{N_{\rm{E}}}}},j} \right),j = 1, \cdots ,{N_{\rm{E}}}
\end{equation}

The second functionality is to  stack  ${{\bf{d}}_j}, 1\le j \le N_{\rm E}$ as a matrix ${\bf{D}}$ satisfying  ${\bf{D}}{\rm{ = }}{\left[ {\begin{array}{*{20}{c}}
			{{{\bf{d}}_{{1}}}}& \cdots &{{{\bf{d}}_{{{{N_{\rm{E}}}}}}}}
	\end{array}} \right]}$. This task  is accomplished by  another IFP node and can be described by a function $e$, satisfying:
\begin{equation}
	{\bf{D}}{\rm{ = }}e\left( {{{\bf{d}}_{{1}}}, \cdots ,{{\bf{d}}_{{{{N_{\rm{E}}}}}}}} \right)
\end{equation}
The  functionality of CS node in this layer  is the same as that in the input layer.

\subsection{Quantum Learning Layer}
As we mentioned previously,  the core of joint design of AMD, UAD and TRP lies in how to find   the unauthentic codeword digits  generated by SAPs from the attacker. This is done by the quantum learning layer. To this end, this layer firstly configures a detection node to realize AMD  and then configures ${N_{\rm{E}}}$ identification nodes to  learn  the attacker's codeword digits. CS node is also configured to provide the matrix $\bf S$. 
\subsubsection{AMD} The functionality of AMD at detection node is built on the following three principles: (1) WB-PJ attack happens only when  $\bf D$ is a all-ones  matrix, (2)  PB-PJ attack happens when ${{\bf{b}}_{\rm{I}}} \notin {{\bf{B}}_K}$ or ${{\bf{b}}_{\rm{I}}} \in {{\bf{B}}_K}$, $\sum\limits_{i = 1}^K {\sum\limits_{j = 1}^{N_{\rm E}} {{b_{j,i}}} }\ne \sum\limits_{i = 1}^{N_{\rm E}} {{M_i}}$, (3)  SC happens when ${{\bf{b}}_{\rm{I}}} \in {{\bf{B}}_K}$ and $\sum\limits_{i = 1}^K {\sum\limits_{j = 1}^{N_{\rm E}} {{b_{j,i}}} }= \sum\limits_{i = 1}^{N_{\rm E}} {{M_i}}$. Following above principles, the function of detection node, denoted by $h_3$, satisfies:
\begin{equation}
	\left( {A,{{\bf{b}}_{\rm{A}}}} \right) = {h_3}\left( {{\bf{D}},{{\bf{b}}_{\rm{I}}},{{\bf{m}}_{\rm{I}}},{\bf{S}}} \right), A{\rm{ = }}\left\{ {\begin{array}{*{20}{c}}
			1&{{\rm{WB - PJ}}}\\
			0&{{\rm{SC}}}\\
			{{\rm{ - }}1}&{{\rm{PB - PJ}}}
	\end{array}} \right.
\end{equation}
where ${{\bf{b}}_{\rm{A}}}={{\bf{b}}_{{\rm{S}},K}}$ when  $A=0,1$ and otherwise ${{\bf{b}}_{\rm{A}}}={{{\bf{b}}_{\rm{I}}}}$.
\subsubsection{Identification of Codeword Digits from Ava} Let us define $B_j$ to indicate whether the  pilot signal on the $j$-th subcarrier is from Ava or not. When  $B_j$ is  equal to 1, Alice can determine that the  pilot signal on the $j$-th subcarrier is transmitted by Ava. Based on this principle, we can know that $B_j=1, \forall j, 1\le j \le N_{\rm E}$ holds true  when Alice identifies WB-PJ attack successfully since the codeword $\bf a$ of Ava in this case is a  vector of all ones. Provided that SC attack is identified successfully,  Alice can know that $\bf a$  would be a zero vector and $B_j=0, \forall j, 1\le j \le N_{\rm E}$ holds true. When PB-PJ attack is identified,  the situation is rather complex, discussed as follows:

If  ${{\bf{a}}} \in {\bf{B}}$, there  exist two possibilities including  ${{{\bf{b}}}_{{\rm{I}}}}\in {\bf{B}}_{K}$ and  ${{{\bf{b}}}_{{\rm{I}}}}\in {\bf{B}}_{K+1}$. For these two cases, Alice can precisely know which ALU is currently being interfered  by Ava. This is done by checking ${\bf{a}}\left( {{j}} \right),1 \le j \le N_{\rm E}$ to examine whether  ${\bf{a}}\left( {{j}} \right),1 \le j \le N_{\rm E}$ is a vector in ${\bf{B}}_{K}$ or not. If  the $j$-th one is,  the ${\bf d}_j$ is deemed to be the codewords of Ava and configures $B_j=1$. 
% the set ${\cal D}_{0}, N_{j\in {\cal D}_{0}}=1$ whereas  ${{\bf{b}}_{{\rm{S}}, K}}$ is only interfered  by the unauthentic digits from the set ${\cal D}_{1}, {{\cal D}_{1} \subseteq {\cal D}_{0}},{a_{j \in {\cal D}_{1} }=1}$ .  To find the set  ${\cal D}_{1}$, Alice configures to be zero the digit indexed by  $j\in {\cal D}_{0}$ and  from ${{\bf{b}}_{{\rm{I}}}}$. Then the  revised version of   ${{\bf{b}}_{\rm{I}}}$  is derived and   Alice searches it  in ${\bf{B}}_{K}$, finally deriving ${{\bf{b}}_{{i}}}, 1\le i \le  B$.  When  ${{\bf{b}}_{{i}}}$ could be  solely located in a sub-matrix $\left[ {{{\bf{b}}_{k \in {{\cal B}_{i}}}}} \right]$ for $1\le i \le  B$, there exists  $j\in {\cal D}_{1}$. Otherwise, there  exist two codewords within one certain sub-matrix and 

If $\bf a \notin {{\bf{B}}}$, there exists ${{{\bf{b}}}_{{\rm{I}}}}\notin {\bf{B}}_{K+1}$. The  discussion is divided into two aspects: 1) $\bf a $ can be included  by ${{\bf{b}}_{{\rm{S}},K}}$.  In this case, there will be ${{\bf{b}}_{\rm{I}}}={{\bf{b}}_{{\rm{S}},K}}$ and Alice can perfectly identify codewords from ALUs even though $\bf a$ is unknown. For simplicity, we define $\bf a$ as  a zero vector; 2) $\bf a $ cannot be not included  by ${{\bf{b}}_{{\rm{S}},K}}$. In this case,  there must exist ${{\bf{b}}_{\rm{I}}}\ne {{\bf{b}}_{{\rm{S}},K}}$ and a set ${\cal D}_{1}$ satisfying $M_{j\in {\cal D}_{1}}=1$. Alice needs to calculate ${\bf{a}}\left( {{j}} \right), j \in {\cal D}_{1}$ and examine whether ${\bf{a}}\left( {{j}} \right),1 \le j \le N_{\rm E}$ is a vector in ${\bf{B}}_{K}$ or not. If it is, the value of $j$ is stored in the set ${\cal D}_{2}$ with  ${\cal D}_{2}\subseteq {\cal D}_{1}$,  and Alice configures $B_{j \in {\cal D}_{2}}=1$.  In this way, Alice can always recognize  $\bf a $ as ${\bf d}_j, j \in {\cal D}_{2}$.
%After deriving the set ${\cal D}_{1}$ or ${\cal D}_{3}$, Alice configures the digits indexed by ${\cal D}_{1}$ or ${\cal D}_{3}$ to be zero and will get authentic ${{\bf{b}}_{{\rm{S}},K}}$ which can be further decomposed as ${{\bf{b}}_i},1\le i\le K$. Those codewords can be easily  recovered as the  original authentic pilots.  The probability of identification error is equal to $P_{\rm I}=0$.  However, due to the uncertain behaviors of Ava,  identifying the unauthentic  digits ( within ${\cal D}_{1}$ or ${\cal D}_{3}$) from Ava is with  high searching complexity.
\begin{figure}[!t]
	\centering \includegraphics[width=1.00\linewidth]{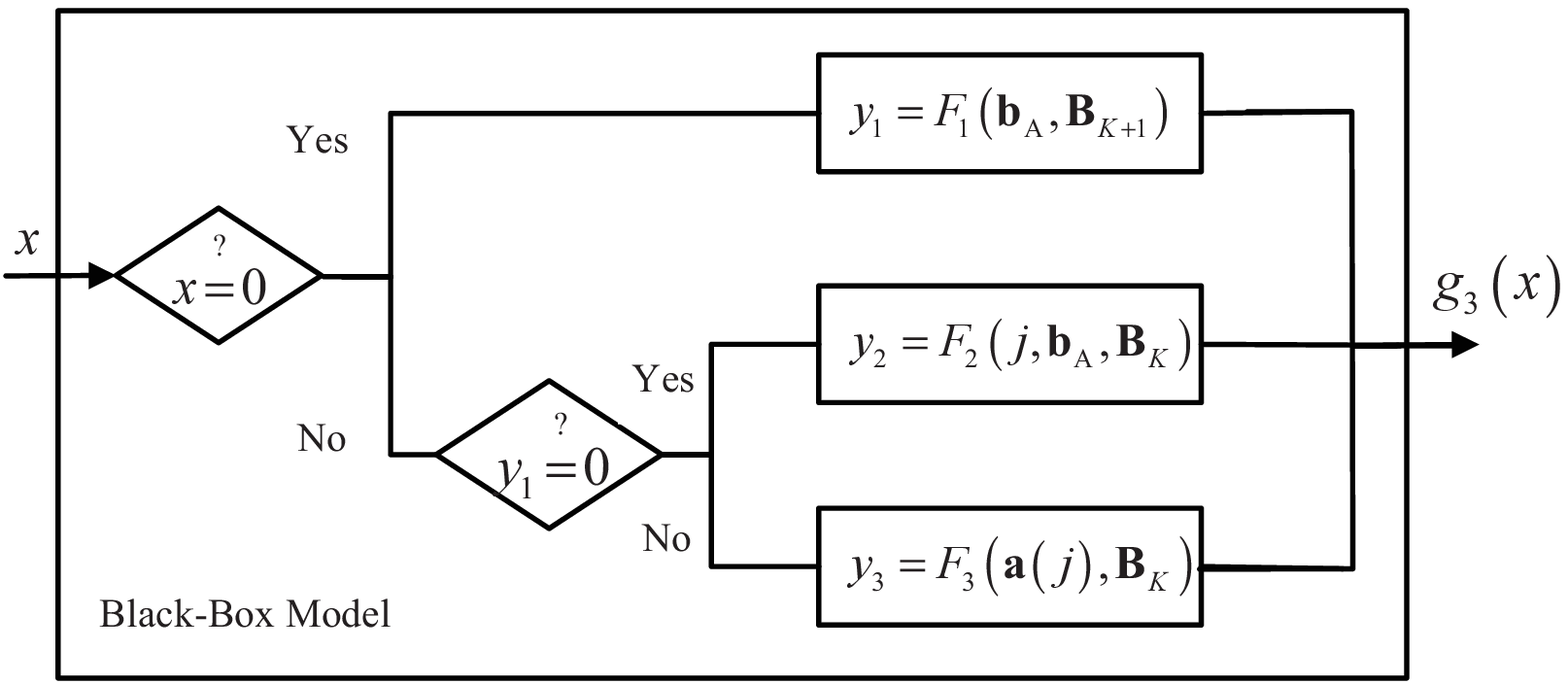}
	\caption{Diagram of the construction of black-box model with its input $x$ and output $g_3\left( x \right)$ satisfying $g_3:\left\{ {0,1} \right\} \to \left\{ {0,1} \right\}$.}
	\label{Black_Box}
	\vspace{-10pt}
\end{figure}

Based on above decoding principle,  we model the function of the $j$-th identification node on this layer as a function ${g_3}$ satisfying:
\begin{equation}
	B_j = {g_3}\left( {x,{{\bf{b}}_{\rm{A}}},{j},{\bf{a}}\left( {{j}} \right),{\bf{S}}} \right),j = 1, \ldots ,{N_{\rm{E}}}
\end{equation}
where the choices of values of $x$ can be 0 and 1, depending on the specific design. More specifically, when configuring $x=0$, Alice examines whether  there exists  ${{{\bf{b}}}_{{\rm{A}}}}\in {\bf{B}}_{K+1}$ by checking ${{{\bf{b}}}_{{\rm{A}}}}$ in codebook ${\bf{B}}_{K+1}$.  This process  is defined as  decision function ${y_1} = {F_1}\left( {{{\bf{b}}_{\rm{A}}},{{\bf{B}}_{K + 1}}} \right)$ which outputs 0  if  ${{\bf{b}}_{\rm{A}}} \in {{\bf{B}}_{K + 1}}$ and otherwise 1.
Otherwise when $x=1$,  Alice has two options, respectively represented by two subfunctions, i.e., ${y_2} = {F_2}\left( {{j},{{\bf{b}}_{\rm{A}}},{{\bf{B}}_{K}}} \right)$ if ${{{\bf{b}}}_{{\rm{A}}}}\in {\bf{B}}_{K+1}$ and ${y_3} = {F_3}\left( {{\bf{a}} \left( {{j}} \right),{\bf{B}}_{K}} \right)$ if ${{{\bf{b}}}_{{\rm{A}}}}\notin {\bf{B}}_{K+1}$.  The details can be shown as follows:
\begin{enumerate}
	\item${y_2} = {F_2}\left( {{j},{{\bf{b}}_{\rm{A}}},{{\bf{B}}_{K}}} \right)$ is a decision function. Firstly, it configures the ${j}$-th digit of ${{\bf{b}}_{\rm{A}}}$  to be zero and derives the revised codeword. Then it outputs 0 if  this revised codeword belongs to ${{\bf{B}}_K}$ and otherwise 1. 
	\item  ${y_3} = {F_3}\left( {{\bf{a}} \left( {{j}} \right),{\bf{B}}} \right)$ is a decision function which outputs 1 if ${\bf{a}} \left( {{j}} \right) \in {\bf{B}}$ and  otherwise 0. 
\end{enumerate}
As we can see, the above function describes the decoding principle well.  For the sake of convenience, in what follows,  we denote  ${g_3}\left( {x,{{\bf{b}}_{\rm{A}}},{j},{\bf{a}}\left( {{j}} \right),{\bf{S}}} \right)$ by  ${g_3\left( x \right)}$.

\subsection{Output Layer}
Based on the output of previous layer, this layer aims to perform UAD and to eliminate  the effect of codeword digits from Ava such that those codewords from ALUs can be recovered. Apart from the CS node, this layer contains two new nodes, one defined by the UAD node and the other one defined by decoding node.  The function of UAD node and decoding node  can be respectively modelled as ${h_4}$ and ${g_4}$, respectively  satisfying:
\begin{equation}
	{\cal K} = {h_4}\left( {{{\bf{b}}_{\rm{A}}},{B_1}, \ldots ,{B_{{N_{\rm{E}}}}},{\bf{S}}} \right)
\end{equation} 
and 
\begin{equation}
	\left( {\left\{ {{{\bf{b}}_i}\left| {i \in \cal K} \right.} \right\}} \right) =  {g_4}\left( {{{\bf{b}}_{\rm{A}}},{B_1}, \ldots ,{B_{{N_{\rm{E}}}}},{\bf{S}}} \right)
\end{equation} 
The details of above two functions are given as follows.
When WB-PJ attack has been identified in quantum learning layer, i.e.,  $B_j=1, \forall 1 \le i \le N_{\rm E}$, $M_i$ is updated as $M_i-1$, $1\le i \le N_{\rm E}$. Correspondingly, Alice encodes the number of signals on each subcarrier and derives   the novel codeword ${{\overline {\bf{b}}}_{{\rm{I}}}}$ which satisfies  ${{\overline {\bf{b}}}_{{\rm{I}}}}\in  {\bf{B}}_{K}$.  ${\bf{b}}_i,  i \in \cal K $ could be derived by decomposing  ${{\overline {\bf{b}}}_{{\rm{I}}}}$ based on the codebook matrix $\bf B$. 
If SC attack has been identified, i.e.,  $B_j=0, \forall 1 \le i \le N_{\rm E}$,  ${\bf{b}}_i,  i \in \cal K$ could be  derived  directly since ${{ {\bf{b}}}_{{\rm{I}}}}={{\bf{b}}_{{\rm{S}}, K}} \in  {\bf{B}}_{K}$.  The above two cases mean that the number and identities of  ALUs, together with their codewords,  can  be precisely determined under WB-PJ and SC attack.  
Otherwise when PB-PJ attack has been successfully  identified in quantum learning layer, i.e.,  $B_j=1$ for $1 \le i \le N_{\rm E}$, 
Alice changes  the $j$-th  code digit of ${{\bf{b}}_{\rm{A}}}$ to be zero. Based on the decoding principle of nonrandom binary superimposed code,  Alice  can derive a new binary codeword  enabled to be decomposed as  ${\bf{b}}_i,  i \in \cal K $. Using this method, Alice can know the number and identities of  ALUs  and their codewords precisely under PB-PJ attack. We can say that UAD can be performed precisely  along with the codeword decoding.

\section{Design Principle of Quantum Learning Layer}
\label{DPQLL}
%Let us focus on the quantum learning layer on which the behaviors of attacker brings about uncertainty. The decoding process looks like a unknown process. This prompts us to model and learn this process. 
%
In this section, we  focus on the function ${g_3\left( x \right)}$ in quantum learning layer and  reveal its intrinsic potential for the support of  quantum learning to capture the uncertainty of codewords from Ava. 

\begin{figure}[!t]
	\centering \includegraphics[width=1.00\linewidth]{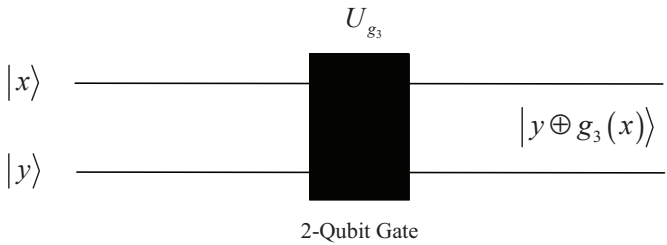}
	\caption{Illustration  of  quantum circuit using  quantum membership oracle. }
	\label{Quantum_phase}
	\vspace{-10pt}
\end{figure}
\begin{figure}[!t]
	\centering \includegraphics[width=1.00\linewidth]{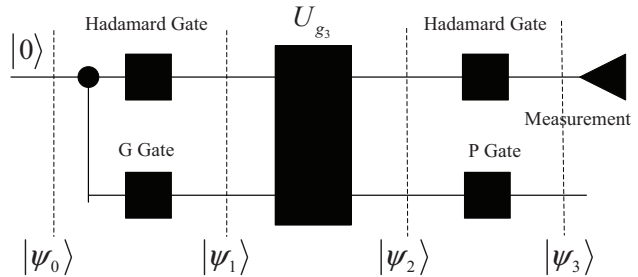}
	\caption{Illustration  of the quantum circuit to  implement a quantum learning  algorithm.}
	\label{Quantum_Box}
	\vspace{-10pt}
\end{figure}
\subsection{The Property of Function ${g_3\left( x \right)}$}
Based on the decoding principle discussed  in the previous layer, we can show the structure of ${g_3\left( x \right)}$ in Fig.~\ref{Black_Box}. Since ${{\bf{b}}_{\rm{A}}}$ and ${\bf{a}} \left( {{j}} \right)$ are random, $y_1$, $y_2$ and $y_3$ are basically  random binary digits. Note that the values of $x$ are also  binary digits. Therefore, we can have the following proposition.    
\begin{proposition}
	${g_3\left( x \right)}$ is  a black-box boolean function which aims to compute the following function:
	\begin{equation}
		g_3:\left\{ {0,1} \right\} \to \left\{ {0,1} \right\}
	\end{equation}
\end{proposition}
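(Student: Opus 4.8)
The plan is to verify the two defining attributes of the claimed object separately: that $g_3$ is a genuine single-valued map whose domain and codomain are both $\left\{ {0,1} \right\}$, and that it is a black box in the sense required for a Deutsch-type oracle. First I would fix an arbitrary access instance. For such an instance the quantities ${{\bf{b}}_{\rm{A}}}$, the index $j$, the masked codeword ${\bf{a}}\left( j \right)$, and the stored matrix ${\bf{S}}$ are all determined by the outputs of the initial, input and hidden layers feeding into the $j$-th identification node, and none of them depends on $x$. This is exactly what licenses abbreviating $g_3\left( {x,{{\bf{b}}_{\rm{A}}},j,{\bf{a}}\left( j \right),{\bf{S}}} \right)$ by ${g_3\left( x \right)}$, and it reduces the claim to a statement about the single free variable $x$. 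By the design of the identification node, $x$ is a configuration bit that Alice sets to either $0$ or $1$, so the domain is $\left\{ {0,1} \right\}$.

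Next I would establish that the output lies in $\left\{ {0,1} \right\}$ and that exactly one output is produced for each admissible $x$, via a case split on $x$. For $x=0$ the node evaluates $g_3\left( 0 \right)=y_1=F_1\left( {{{\bf{b}}_{\rm{A}}},{{\bf{B}}_{K+1}}} \right)$, which by definition returns $0$ when ${{\bf{b}}_{\rm{A}}}\in{{\bf{B}}_{K+1}}$ and $1$ otherwise; since the membership test ${{\bf{b}}_{\rm{A}}}\in{{\bf{B}}_{K+1}}$ has a definite truth value, $y_1$ is a well-defined element of $\left\{ {0,1} \right\}$. For $x=1$ the node branches on the \emph{same} membership condition: it returns $y_2=F_2\left( {j,{{\bf{b}}_{\rm{A}}},{{\bf{B}}_{K}}} \right)$ when ${{\bf{b}}_{\rm{A}}}\in{{\bf{B}}_{K+1}}$ and $y_3=F_3\left( {{\bf{a}}\left( j \right),{{\bf{B}}_{K}}} \right)$ when ${{\bf{b}}_{\rm{A}}}\notin{{\bf{B}}_{K+1}}$. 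The point I expect to be the crux is well-definedness of this second case: the two branches are guarded by complementary conditions, so for any fixed instance exactly one of them fires, and each of $F_2$ and $F_3$ is itself a $\left\{ {0,1} \right\}$-valued decision rule. Hence $g_3\left( 1 \right)\in\left\{ {0,1} \right\}$ is unambiguous, and combining the two cases yields a single-valued total map $g_3:\left\{ {0,1} \right\}\to\left\{ {0,1} \right\}$.

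Finally I would justify the black-box qualifier. Because the attacker's codeword ${{\bf{b}}_{\rm{A}}}$ and the masked vector ${\bf{a}}\left( j \right)$ are produced by random coding at Ava and are therefore random from Alice's viewpoint, the binary outputs $y_1$, $y_2$, $y_3$ are random bits whose values Alice cannot predict in advance from the system parameters alone. Consequently the input--output table of $g_3$ over $\left\{ {0,1} \right\}$ is not fixed a priori: $g_3$ may turn out constant or balanced depending on the realized attack, and this is precisely the hidden-function structure that a membership oracle ${U_f}:\left| {x,y} \right\rangle \mapsto \left| {x,y \oplus g_3\left( x \right)} \right\rangle$ is meant to encapsulate, matching the quantum-learning framework of Section~\ref{Preliminaries}. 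I would close by noting that the argument nowhere exploits the internal logic of $F_1$, $F_2$, $F_3$ beyond their being $\left\{ {0,1} \right\}$-valued, which is exactly the sense in which $g_3$ is treated as a black box rather than an explicitly computed function.
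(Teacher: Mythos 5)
Your proposal is correct and follows essentially the same route as the paper: the paper's own (very terse) justification is exactly that $x$ takes binary values, that $y_1$, $y_2$, $y_3$ are binary outputs of the decision functions $F_1$, $F_2$, $F_3$, and that the randomness of ${{\bf{b}}_{\rm{A}}}$ and ${\bf{a}}\left( j \right)$ makes these outputs unpredictable to Alice, hence a black-box map $\left\{ {0,1} \right\} \to \left\{ {0,1} \right\}$. You simply make explicit what the paper leaves implicit — the independence of the fixed arguments from $x$, the complementary guards in the $x=1$ branch ensuring single-valuedness, and the precise sense in which unpredictability of the truth table constitutes the ``black-box'' property — which strengthens rather than alters the argument.
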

\begin{theorem}
	If ${g_3\left( x \right)}$ is a balanced function,  the $j$-th codeword digit belongs to ALU. Otherwise when ${g_3\left( x \right)}$ is a constant function, the $j$-th codeword digit belongs to Ava. 
\end{theorem}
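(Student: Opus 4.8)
The plan is to exploit the elementary fact that any map $g_3:\{0,1\}\to\{0,1\}$ exhibits exactly two mutually exclusive and exhaustive behaviours: it is \emph{constant} when $g_3(0)=g_3(1)$ and \emph{balanced} when $g_3(0)\neq g_3(1)$. Because these cases are exhaustive, it suffices to prove the single equivalence that $g_3(0)=g_3(1)$ holds if and only if the $j$-th digit belongs to Ava (equivalently $B_j=1$), from which the balanced/ALU case follows by negation. First I would fix the subcarrier index $j$ and read off $g_3(0)=F_1(\mathbf{b}_{\rm A},\mathbf{B}_{K+1})$ straight from its definition, so that $g_3(0)=0$ exactly when $\mathbf{b}_{\rm A}\in\mathbf{B}_{K+1}$ and $g_3(0)=1$ otherwise. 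This single evaluation splits the argument into the very two branches that already govern which subfunction is selected for $g_3(1)$.

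In the branch $\mathbf{b}_{\rm A}\in\mathbf{B}_{K+1}$ one has $g_3(0)=0$ and $g_3(1)=F_2(j,\mathbf{b}_{\rm A},\mathbf{B}_K)$; here I would invoke the decoding principle of Section~\ref{QLNSDUA} to argue that nulling the $j$-th digit of $\mathbf{b}_{\rm A}$ returns an element of $\mathbf{B}_K$ precisely when that activation was Ava's, in which case $F_2=0$, $g_3$ is constant, and $B_j=1$; if instead the activation is essential to one of the $K$ legitimate codewords, the nulled vector leaves $\mathbf{B}_K$, so $F_2=1$, $g_3$ is balanced, and $B_j=0$. In the complementary branch $\mathbf{b}_{\rm A}\notin\mathbf{B}_{K+1}$ one has $g_3(0)=1$ and $g_3(1)=F_3(\mathbf{a}(j),\mathbf{B}_K)$, where $F_3$ returns $1$ exactly when $\mathbf{a}(j)$ re-enters the legitimate $K$-fold superposition set, i.e.\ when the single-occupancy subcarrier ($M_j=1$) carried only Ava's isolated contribution. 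Collecting the four sub-cases, $g_3(0)=g_3(1)$ holds if and only if $B_j=1$, which is the asserted correspondence between constant/balanced and Ava/ALU.

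The hard part will be the combinatorial step buried inside $F_2$ and $F_3$: I must justify that the observed vector $\mathbf{b}_{\rm I}=\mathbf{b}_{{\rm S},K}\vee\mathbf{a}$, after deletion of the activation at $j$, re-enters $\mathbf{B}_K$ \emph{if and only if} that activation is attributable to Ava rather than to a legitimate user. This is exactly where the defining property of the order-$(K+1)$ Kautz--Singleton superimposed code enters: by Property~2 every superposition of at most $K+1$ codewords decomposes uniquely, and no codeword is contained in the boolean sum of $K$ others, so the deletion test cannot be deceived by an ALU-owned digit masquerading as Ava's. I would therefore isolate a short lemma formalising the statement ``deleting an activation lowers the superposition order by one if and only if that position is Ava-exclusive,'' proving it from the uniqueness of decomposition together with the single-occupancy counts $M_j$ supplied by the EFE nodes to localise the deletion. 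Once that lemma is established, the theorem follows immediately by substituting its conclusion into the branch-by-branch evaluation of $g_3(0)$ and $g_3(1)$ sketched above.
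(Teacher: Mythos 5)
Your proposal is correct and is in fact more rigorous than the paper's own proof. The paper's argument consists only of the reduction you state at the outset: since $g_3:\{0,1\}\to\{0,1\}$ must be either constant or balanced, Alice computes $g_3(0)\oplus g_3(1)$ and reads off the owner of the digit, attributing the value $0$ to Ava and the value $1$ to the ALUs. The paper never opens the branches $F_1$, $F_2$, $F_3$ and never justifies why the XOR value tracks ownership; that correspondence is simply inherited, unproved, from the decoding principle asserted in Section IV. (Note also that the paper's proof swaps the two labels: it calls $g_3(0)\oplus g_3(1)=0$ the ``balanced'' case and $g_3(0)\oplus g_3(1)=1$ the ``constant'' case, which is backwards under the standard definitions and makes the proof formally contradict the theorem it is proving; your convention --- constant iff $g_3(0)=g_3(1)$ --- is the standard one and restores consistency between statement and argument.) Your branch-by-branch evaluation of $F_1$, $F_2$, $F_3$, and especially the lemma you isolate --- that deleting the activation at position $j$ returns the observed word to $\mathbf{B}_K$ if and only if that position is Ava-exclusive, proved from the unique-decomposition property of the order-$(K+1)$ Kautz--Singleton code together with the occupancy counts $M_j$ --- is precisely the content missing from the paper: it is what turns the asserted dichotomy into a theorem. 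What the paper's version buys is brevity, at the cost of being essentially a restatement of the decoding rules; what yours buys is an actual derivation of the constant/Ava and balanced/ALU correspondence from the combinatorial property of the superimposed code, so you should carry out the lemma you sketched rather than defer to Section IV as the paper does.
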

\begin{proof}
	In principle, a boolean function can be either a constant function or a balanced function. 
	In order to determine whether the $j$-th digit belongs to Ava, Alice  has to calculate the function $g_3\left( 0 \right) \oplus g_3\left( 1 \right)$. If $g_3\left( 0 \right) \oplus g_3\left( 1 \right) = 0$, the  boolean function $g_3$ is  balanced and Alice   confirms that the $j$-th codeword digit belongs to Ava undoubtedly. Otherwise if $g_3\left( 0 \right) \oplus g_3\left( 1 \right) = 1$, the  boolean function $g_3$ is  constant and the $j$-th codeword digit  belongs to ALUs.
\end{proof}
The problem of identifying codeword digits from Ava has been transformed into an issue of how to compute the black-box boolean  function ${g_3\left( x \right)}$. 
\begin{figure*}[!t]
	\centering \includegraphics[width=1\linewidth]{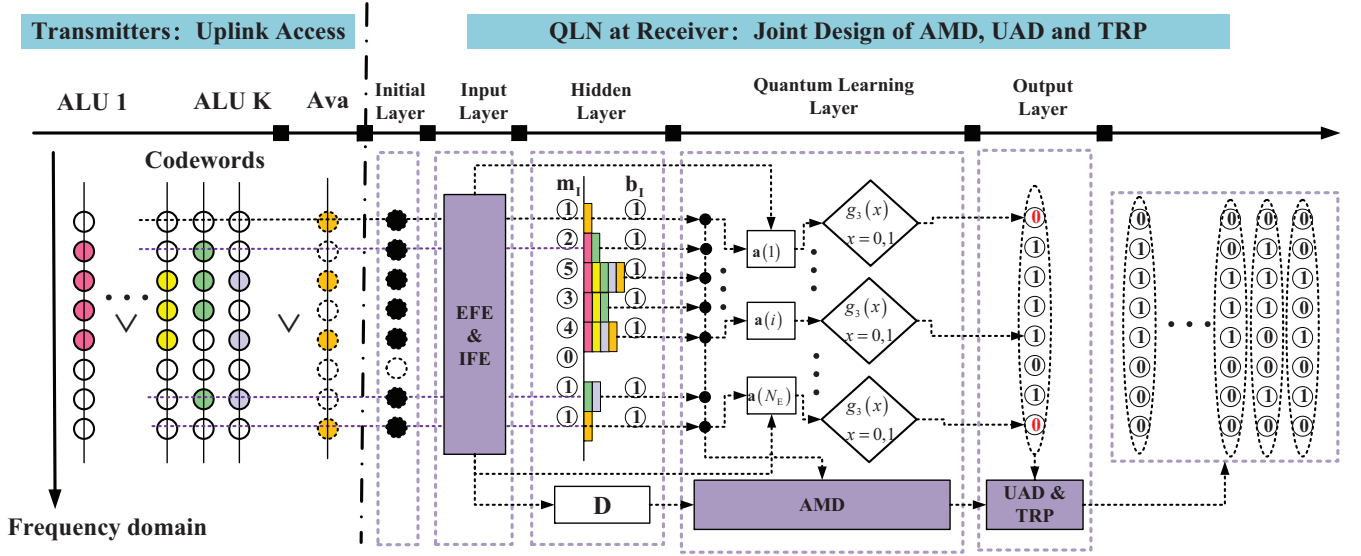}
	\caption{ QLN based decoding process of nonrandom superimposed code on SAPs. }	
	\label{QLN_detail}
	\vspace{-15pt}
\end{figure*}
\subsection{Quantum Learning Algorithm}
%\begin{definition}
%	\textbf{Input:} A black box for computing an unknown function $f:\left\{ {0,1} \right\} \to \left\{ {0,1} \right\}$. \textbf{Problem:} Determine the value of  $f\left( 0 \right) \oplus f\left( 1 \right)$ by making queries to $f$.
%\end{definition}
We in this subsection  explore quantum membership oracle to support  quantum learning  to learn whether the function  ${g_3\left( x \right)}$ is constant or balanced.  A quantum membership oracle ${U_{g_3}}$ is a unitary transformation that acts on the computational basis states as:
\begin{equation}
	{U_{g_3}}:\left| x \right\rangle \left| y \right\rangle   \mapsto {\rm{ }}{{U_{g_3}}}\left| x \right\rangle \left| y \right\rangle \mapsto \left| x \right\rangle \left| {y \oplus {g_3}\left( x \right)} \right\rangle .
\end{equation}
where ${U_{g_3}}$ denotes a quantum circuit for $g_3$ on the second qubit, shown in Fig.~\ref{Quantum_phase}. $\left| x \right\rangle$ denotes  the first input qubit  while $\left| y \right\rangle$  denotes   the second one.
When the second input qubit  is configured to be in the
state $\left| y \right\rangle  = \left| 0 \right\rangle $, then $\left| x \right\rangle  = \left| 0 \right\rangle $ in the first input qubit  will give $\left| {0 \oplus f\left( 0 \right)} \right\rangle  = \left| {f\left( 0 \right)} \right\rangle $ in the second output bit.  Similarly,  $\left| x \right\rangle  = \left| 1\right\rangle $ in the first input qubit will give $\left| {f\left( 1 \right)} \right\rangle$.  Therefore,  $\left| x \right\rangle  = \left| 0\right\rangle $ can be deemed  as a quantum version of the (classical) input bit 0, and $\left| x \right\rangle  = \left| 1\right\rangle $  as a quantum version of the input bit 1.
If the quantum circuit is expected to resolve a specific classical problem, the input and output should be designed deliberately. 

Now let us turn to the design of quantum circuit. We define the first qubit as control qubit and the second qubit  as  target register. For the control qubit,  two  Hadamard gates are configured, respectively at the input and output.  This can be seen in Fig.~\ref{Quantum_Box}.  Besides, we configure a G gate at the input of target register.  The input before the G gate is from the control qubit. The G gate is defined by ${\bf{G}}{\rm{ = }}\frac{1}{{\sqrt 2 }}\left[ {\begin{array}{*{20}{c}}
		1&1\\
		{ - 1}&1
\end{array}} \right]$. We also configure a P gate which is expressed as ${\bf{P}}{\rm{ = }}\frac{1}{{\sqrt 2 }}\left[ {\begin{array}{*{20}{c}}
		1&-1\\
		{ - 1}&-1
\end{array}} \right]$ at the second  output. 

In what follows,  we  specify  the working principle of quantum circuit  using   four 2-qubit basis vectors, i.e., $\left| {{\psi _0}} \right\rangle$, $\left| {{\psi _1}} \right\rangle$, $\left| {{\psi _2}} \right\rangle$ and $\left| {{\psi _3}} \right\rangle$. Note that $\left| {{\psi _0}} \right\rangle$, $\left| {{\psi _1}} \right\rangle$ and $\left| {{\psi _2}} \right\rangle$ are just the imitate representation  of quantum state  for the sake of  our understanding, rather than  the  results in measurement,  since each measurement will change the quantum state.  Those bases can be expressed by:
\begin{equation}
	\left| {{\psi _0}} \right\rangle  = \left| 0 \right\rangle 
\end{equation}
\begin{eqnarray}
	\left| {{\psi _1}} \right\rangle  = \left( {\frac{1}{{\sqrt 2 }}\left| 0 \right\rangle  + \frac{1}{{\sqrt 2 }}\left| 1 \right\rangle } \right)\left(  {\frac{{\left| 0 \right\rangle  - \left| 1 \right\rangle }}{{\sqrt 2 }}}\right)
\end{eqnarray}
\begin{eqnarray}
	\hspace{-10pt}
	\left| {{\psi _2}} \right\rangle  \hspace{-5pt}&=&\hspace{-5pt} \frac{{{{\left( { - 1} \right)}^{{g_3}\left( 0 \right)}}}}{{\sqrt 2 }}\left| 0 \right\rangle \left( {\frac{{\left| 0 \right\rangle  - \left| 1 \right\rangle }}{{\sqrt 2 }}} \right) + \frac{{{{\left( { - 1} \right)}^{{g_3}\left( 1 \right)}}}}{{\sqrt 2 }}\left| 1 \right\rangle \left( {\frac{{\left| 0 \right\rangle  - \left| 1 \right\rangle }}{{\sqrt 2 }}} \right)   \nonumber\\
	\hspace{-5pt}&=&\hspace{-5pt} {\left( { - 1} \right)^{{g_3}\left( 0 \right)}}\left( {\frac{{\left| 0 \right\rangle  + {{\left( { - 1} \right)}^{{g_3}\left( 0 \right) \oplus {g_3}\left( 1 \right)}}\left| 1 \right\rangle }}{{\sqrt 2 }}} \right)\left( {\frac{{\left| 0 \right\rangle  - \left| 1 \right\rangle }}{{\sqrt 2 }}}\right)
\end{eqnarray}
Examining Eq. (35), we know that  when  ${g_3}\left( 0 \right) \oplus {g_3}\left( 1 \right) = 0$, the result  $\left| {{\psi _2}} \right\rangle  = {\left( { - 1} \right)^{{g_3}\left( 0 \right)}}\left( {\frac{{\left| 0 \right\rangle  + \left| 1 \right\rangle }}{{\sqrt 2 }}} \right)\left( {\frac{{\left| 0 \right\rangle  - \left| 1 \right\rangle }}{{\sqrt 2 }}} \right) $ holds true.  The final Hadamard gate  on the control qubit and and P gate on the target register  transform the state to:
\begin{equation}
	\left| {{\psi _3}} \right\rangle  = {\left( { - 1} \right)^{{g_3}\left( 0 \right)}}\left| 0 \right\rangle \left| 0 \right\rangle
\end{equation}
As we can see, we need to solely focus on the  measurement of control qubit. Obviously, the squared norm of the basis state $\left| 0 \right\rangle $ in the control qubit is 1,  meaning that when   ${g_3}\left( 0 \right) \oplus {g_3}\left( 1 \right) = 0$  one measurement of the control qubit is certain to return the value 0. Similarly, when  ${g_3}\left( 0 \right) \oplus {g_3}\left( 1 \right) = 1$  one measurement of the control qubit is certain to return the value 1. 
One measurement of the control  qubit  can directly determine the value of ${g_3}\left( 0 \right) \oplus {g_3}\left( 1 \right) $ . Therefore, Alice using above quantum circuit  can  decide whether or not the codeword digit at $j$ belongs to Ava  at the cost of only half of the query complexity of original nonrandom superimposed coding.

In our design scheme, a quantum circuit with a single qubit is sufficient to implement an quantum oracle. Note that the scheme in~\cite{Z_Gedik} also proposed an example for quantum computation task with a single-qubit  input. However, it requires  at least  a  three-level quantum system and has special requirements on the computational task, i.e., determining parity of cyclic permutations. Different from that, our scheme is independent with the computational task even for a two-level quantum system, which can be well generalized to higher dimensional cases. 

Finally, we provide the overall implementation   details of QLN  in decoding process of nonrandom superimposed code on SAPs in Fig.~\ref{QLN_detail}.

\section{Performance Analysis}
\label{PA}
In the previous sections, we have detailed the principle of joint design  of AMD, UAD and TRP in QLN, and  also shown how to eliminate the effect of pilot-aware attack completely.  Benefiting from this, the new grant-free URLLC system under pilot-aware attack  can maintain normal channel estimation and data decoding  by consuming $K+2$ OFDM symbols and $N_{\rm E}$ pilot subcarriers. However, we still need to analyze  the reliability performance of this grant-free URLLC system.  To this end, we consider using the  channel estimation model shown in Eq. (9) and the data transmission model shown in Eq. (13). The goal is to  derive the expressions of failure probability for the grant-free URLLC system. 

Basically, a certain probability of failure occurs in uplink short packet data transmission  due to the  inevitable  decoding error of short packets.  There has been a well-known expression in literature~\cite{Berardinelli} to  calculate the decoding error probability $P_{d}$  of  transmissions over fading channels as a function of the average received SNR $\gamma_{0}$, the transmission rate $R$ and  matched filter  receiver, given by:
\begin{equation}
	{P_{\rm d}} = \int_0^\infty  {{Q\left( {\frac{{C\left( x  \right) - R}}{{\sqrt {{{V\left( x  \right)} \mathord{\left/
									{\vphantom {{V\left( x \right)} {\left( {{N_{\rm D}}m_{\rm D}} \right)}}} \right.
									\kern-\nulldelimiterspace} {\left( {{N_{\rm D}}m_{\rm D}} \right)}}} }}} \right)}} {f_{{K_c}}}\left( x \right)dx
\end{equation} 
where  $Q\left( x \right) = \int_x^\infty  {\frac{1}{{\sqrt {2\pi } }}} {e^{ - \frac{{{t^2}}}{2}}}dt$, $C\left( x \right) = {\log _2}\left( {1 + x } \right)$ and $V\left( x \right) = 1 - \frac{1}{{{{\left( {1 + x } \right)}^2}}}$.  On each subcarrier within TFRG$\#$3, $K_{c}$ interfering signals coexist  and there exists $K_{c}=K-1$. Without loss of generality,   we assume interfering  signals are Gaussian distributed  and  received at Alice with the same average SNR $\gamma _0$. The distribution ${f_{{K_c}}}\left( x \right) $ under matched filter receiver satisfies:
\begin{equation}
	{f_{{K_c}}}\left( x \right) = \frac{{{x^{{N_T} - 1}}{e^{ - \frac{x}{{{\gamma _0}}}}}}}{{\left( {{N_T} - 1} \right)!\gamma _0^{{K_c} + 1}}}\sum\limits_{i = 0}^{{N_T}} {\left( {\begin{array}{*{20}{c}}
				{{N_T}}\\
				i
		\end{array}} \right)} \frac{{\gamma _0^{{K_c} + i}\Gamma \left( {{K_c} + i} \right)}}{{\Gamma \left( {{K_c}} \right){{\left( {x + 1} \right)}^{{K_c} + i}}}}
\end{equation}
where ${\Gamma \left( \cdot \right)}$ denotes the Gamma function. 

In this paper, we assume at most one retransmission can be supported in the uplink.  Since TRP and UAD are both precise, the first transmission is deemed successful if its data is decoded successfully. In this case, the probability of correct data  decoding   is ${1 - {P_d}}$. Otherwise, the ALU would perform a retransmission over shared resources. The probability of correctly decoding the retransmitted data can be calculated by ${P_d}\left( {1 - {P_d}} \right)$. Finally the failure probability of grant-free URLLC system, denoted by $P_{\rm e}$, is given by:
\begin{equation}
	{P_{e}} = 1 - \left( {1 - {P_d}} \right) - {P_d}\left( {1 - {P_d}} \right) = P_d^2
\end{equation}
% \subsection{Asymptotic Results}
%
% It is well understood that multi-antenna technique is instrumental to guarantee URLLC. With massive antennas equipped at future gNB, the remarkable properties tailored for URLLC could be created, such as very high SNR links, quasi-deterministic links and extreme spatial multiplexing capability~\cite{Bai}. Consider the case presented here, it is necessary to evaluate the above failure probability of grant-free URLLC system in the large antenna regime.  

\begin{theorem}
	With precise CSI estimated, the asymptotic expression of received signal-to-interference-plus-
	noise ratio (SINR)  at Alice as $N_{\rm T}\to \infty $ is given by:
	\begin{equation}
		{\gamma _{asy}} \triangleq \gamma _{asy}^{{\text{perfect}}} \xrightarrow[N_{\rm T}\to \infty]{{\rm a.s.}} \frac{{{N_{\text{T}}}{\gamma _0}}}{{{\gamma _0}{K_c} + 1}}
	\end{equation}
	and the result with estimation error is given by:
	\begin{equation}
		{\gamma _{asy}} \triangleq \gamma _{asy}^{{\text{error}}} \xrightarrow[N_{\rm T}\to \infty]{{\rm a.s.}} \frac{{{N_{\text{T}}}{\gamma _0}\left( {1 - \lambda } \right)}}{{{\gamma _0}{K_c} + \lambda {\gamma _0} + 1}},0 < \lambda  < 1
	\end{equation}
	The decoding error probability satisfies:
	\begin{equation}
		{P_{\text{d}}} \xrightarrow[N_{\rm T}\to \infty]{{\rm a.s.}}  Q\left( {\frac{{C\left( {{\gamma _{asy}}} \right) - R}}{{\sqrt {{{V\left( {{\gamma _{asy}}} \right)} \mathord{\left/
								{\vphantom {{V\left( {{\gamma _{asy}}} \right)} {\left( {{N_{\text{D}}}{m_{\text{D}}}} \right)}}} \right.
								\kern-\nulldelimiterspace} {\left( {{N_{\text{D}}}{m_{\text{D}}}} \right)}}} }}} \right)
	\end{equation}
	and the failure probability of grant-free URLLC  can be finally expressed as Eq.~(43). The specific value of ${\gamma _{asy}}$ depends on the estimation assumption above. 
	\begin{figure*}[ht]
		\begin{equation}
			\label{Eq_23}
			{P_{\text{e}}} \xrightarrow[N_{\rm T}\to \infty]{{\rm a.s.}}   {\left\{ { { Q\left( {\frac{{\left( {1 + {\gamma _{asy}}} \right)\left[ {{{\log }_2}\left( {1 + {\gamma _{asy}}} \right) - \frac{R}{{{m_{\text{D}}}{T_{\text{s}}}{N_{\text{D}}}\Delta f}}} \right]\sqrt {{N_{\text{D}}}{m_{\text{D}}}{T_{\text{s}}}} }}{{\sqrt {\left[ {{{\left( {1 + {\gamma _{asy}}} \right)}^2} - 1} \right]{T_{\text{s}}}} }}} \right)} } \right\}^2}
			\vspace{-5pt}
		\end{equation}
	\end{figure*}
\end{theorem}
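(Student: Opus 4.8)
The plan is to start from the matched-filter output already written out term-by-term in Eq.~(13) and to track the almost-sure growth of each of its three pieces as $N_{\rm T}\to\infty$. The first piece $\frac{d_{{\rm L},m}[k]}{N_{\rm T}}\sum_{i}\widehat g_{j,m,i}^{*}g_{j,m,i}$ is the desired signal, the double sum over $p\neq m$ is the multiuser interference, and the last piece is the noise projection. Under the i.i.d.\ Gaussian channel model induced by Eqs.~(11) and~(12), each of these is an empirical average to which the strong law of large numbers applies. For the perfect-CSI case $\widehat{\mathbf g}_{j,m}=\mathbf g_{j,m}$ I would first establish the coherent-combining limit $\frac{1}{N_{\rm T}}\sum_{i}g_{j,m,i}^{*}g_{j,m,i}\xrightarrow{\rm a.s.}1$, so the desired amplitude scales like $N_{\rm T}$ and its power like $N_{\rm T}^{2}\gamma_{0}$. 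Since $\mathbf g_{j,p}$ ($p\neq m$) and $\mathbf w_{j}$ are independent of $\mathbf g_{j,m}$ and zero-mean, the cross terms obey $\frac{1}{N_{\rm T}}\widehat{\mathbf g}_{j,m}^{\rm H}\mathbf g_{j,p}\xrightarrow{\rm a.s.}0$ and $\frac{1}{N_{\rm T}}\widehat{\mathbf g}_{j,m}^{\rm H}\mathbf w_{j}\xrightarrow{\rm a.s.}0$, while the second moments that actually enter the SINR grow only like $N_{\rm T}$, giving interference power $\approx\gamma_{0}K_{c}N_{\rm T}$ and noise power $\approx N_{\rm T}$. Forming the ratio, the $N_{\rm T}^{2}/N_{\rm T}$ cancellation leaves the array gain $N_{\rm T}$ in the numerator and yields $\gamma_{asy}^{\rm perfect}\sim N_{\rm T}\gamma_{0}/(\gamma_{0}K_{c}+1)$, i.e.\ Eq.~(40).

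For the estimation-error case I would repeat the computation after splitting $\widehat{\mathbf g}_{j,m}$ into the part correlated with $\mathbf g_{j,m}$ and an orthogonal error part of relative power $\lambda$. The correlated part reproduces the coherent gain but scaled by $(1-\lambda)$, so the signal power drops to $\propto(1-\lambda)N_{\rm T}^{2}\gamma_{0}$ at leading order, whereas the error part is uncorrelated with $\mathbf g_{j,m}$ and hence behaves like an extra independent interferer of strength $\lambda\gamma_{0}$. This is exactly the additional term that upgrades the denominator from $\gamma_{0}K_{c}+1$ to $\gamma_{0}K_{c}+\lambda\gamma_{0}+1$, producing $\gamma_{asy}^{\rm error}$ as in Eq.~(41); the perfect-CSI result is recovered by setting $\lambda=0$.

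With $\gamma_{asy}$ in hand I would turn to the reliability expressions. The finite-blocklength error probability of Eq.~(37) is the average of the bounded, continuous map $x\mapsto Q\!\left((C(x)-R)/\sqrt{V(x)/(N_{\rm D}m_{\rm D})}\right)$ against the SINR density $f_{K_c}$ of Eq.~(38). Because the SINR concentrates on its deterministic asymptotic value $\gamma_{asy}$, this density collapses to a point mass there, and by the boundedness of $Q$ (dominated convergence) the integral converges to the integrand evaluated at $\gamma_{asy}$, which is Eq.~(42). Squaring via the one-retransmission identity $P_{\rm e}=P_{\rm d}^{2}$ of Eq.~(39) and then substituting $C(\gamma_{asy})=\log_{2}(1+\gamma_{asy})$, $V(\gamma_{asy})=1-(1+\gamma_{asy})^{-2}$ and $R=B/(m_{\rm D}T_{\rm s}N_{\rm D}\Delta f)$, the argument of $Q$ simplifies to the form displayed in Eq.~(43).

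I expect the main obstacle to be the rigorous justification of the almost-sure concentration of the SINR rather than the bookkeeping of individual moments. While the desired-signal norm $\|\mathbf g_{j,m}\|^{2}/N_{\rm T}$ concentrates tightly, the aggregate interference $\sum_{p\neq m}|\widehat{\mathbf g}_{j,m}^{\rm H}\mathbf g_{j,p}|^{2}$ retains $O(1)$ relative fluctuations when the number of interferers $K_{c}$ is held fixed, so the clean deterministic limit really calls for a deterministic-equivalent argument (replacing the interference by its conditional mean, which is exact once $K_{c}$ also grows) or explicit fluctuation control. The secondary delicate point is interchanging the $N_{\rm T}\to\infty$ limit with the integral in Eq.~(37), which I would handle through the continuity and uniform boundedness of $Q$; the algebra converting Eq.~(42) into Eq.~(43) is then routine.
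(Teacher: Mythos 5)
Your proposal follows essentially the same route as the paper's own proof: term-by-term strong-law-of-large-numbers limits applied to the matched-filter decomposition in Eq.~(13), formation of the asymptotic SINR ratio separately for the perfect-CSI and estimation-error cases, collapse of the fading average in Eq.~(37) to the integrand evaluated at the deterministic $\gamma_{asy}$, and the final squaring through the retransmission identity $P_{\rm e}=P_{\rm d}^{2}$ of Eq.~(39). The concentration caveat you flag --- that for fixed $K_c$ the aggregate interference power only \emph{averages} to $\gamma_0 K_c/N_{\rm T}$ rather than converging almost surely to it, so a deterministic-equivalent or fluctuation-control argument is strictly needed --- is accurate, but this same gap is present in the paper's proof, which likewise passes from zero-mean cross terms to the expected interference power without further justification.
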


\begin{proof}
	Eq. (13) represents the model of received signal from the $m$-th ALU. According to the strong law of large numbers~\cite{Hoydis}, the following equations hold true
	\begin{equation}{}
		\frac{1}{{{N_{\text{T}}}}}\sum\limits_{i = 1}^{{N_{\text{T}}}} {{{\left| {{g_{j,{m},i}}} \right|}^2}}  \xrightarrow[N_{\rm T}\to \infty]{{\rm a.s.}}  {\mathbb E}{\left| {{g_{j,{m},i}}} \right|^2} = 1
	\end{equation}
	\begin{equation}{}
		\sum\limits_{p \in {\cal K},p \ne m}^{} {\sum\limits_{i = 1}^{{N_{\rm{T}}}} {\frac{{\hat g_{j,m,i}^*{g_{j,p,i}}}}{{{N_{\rm{T}}}}}} }  \xrightarrow[N_{\rm T}\to \infty]{{\rm a.s.}}  {\mathbb E}\left[ {\widehat g_{j,{m},i}^*{g_{j,p,i}}} \right]_{{p\in \cal K}, p \ne m} = 0
	\end{equation}
	\begin{equation}{}
		\frac{1}{{{N_{\text{T}}}}}\sum\limits_{i = 1}^{{N_{\text{T}}}} {\widehat g_{j,{m},i}^*{w_{j,i}}\left[ k \right]}  \xrightarrow[N_{\rm T}\to \infty]{{\rm a.s.}}   {\mathbb E}\left[ {\widehat g_{j,{m},i}^*{w_{j,i}}\left[ k \right]} \right] = 0
	\end{equation}
	When ${\widehat {\mathbf{g}}_{j,{m}}} = {{\mathbf{g}}_{j,{m}}}$, the received SINR for the $i$-th ALU can be written as:
	\begin{equation}
		\gamma _{asy}^{{\text{perfect}}} \xrightarrow[N_{\rm T}\to \infty]{{\rm a.s.}}  \frac{{\frac{1}{{{N_{\text{T}}}}}\sum\limits_{i = 1}^{{N_{\text{T}}}} {{{\left| {{g_{j,{m},i}}} \right|}^2}} }}{{{{\frac{{\left( {\sum\limits_{{p \in {\cal K},p \ne m}} {\sum\limits_{i = 1}^{{N_{\text{T}}}} {\widehat g_{j,{m},i}^*{g_{j,p,i}}} } } \right)}}{{{N_{\text{T}}}\sum\limits_{i = 1}^{{N_{\text{T}}}} {{{\left| {{g_{j,{m},i}}} \right|}^2}} }}}^2} + \frac{1}{{{N_{\text{T}}}{\gamma _0}}}}}
	\end{equation}
	Eq. (40) can be derived after simplification.  The received SINR for the $i$-th ALU under estimation errors can be written as:
	\begin{equation}
		\gamma _{asy}^{{\text{error}}}\xrightarrow[N_{\rm T}\to \infty]{{\rm a.s.}} \frac{{\frac{1}{{{N_{\text{T}}}}}\sum\limits_{i = 1}^{{N_{\text{T}}}} {\widehat g_{j,{m},i}^*{g_{j,{m},i}}} }}{{{{\frac{{\left( {\sum\limits_{{p \in {\cal K},p \ne m}} {\sum\limits_{i = 1}^{{N_{\text{T}}}} {\widehat g_{j,{m},i}^*{g_{j,p,i}}} } } \right)}}{{{N_{\text{T}}}\sum\limits_{i = 1}^{{N_{\text{T}}}} {\widehat g_{j,{m},i}^*{g_{j,{m},i}}} }}}^2} + \frac{1}{{{N_{\text{T}}}{\gamma _0}}}}}
	\end{equation}
	where ${\widehat g_{j,{m},i}} = \left( {1 - \lambda } \right){g_{j,{m},i}} - \lambda {\widetilde g_{j,{m},i}},0 < \lambda  < 1$.  After  simplification, we can derive the Eq. (41). Due to the disappearance of channel randomness under large number of  antennas, we can  calculate Eq.~(37)  as Eq.~(42). By substituting Eq.~(42) into Eq.~(39), we can derive  Eq.~(43). The proof is complete. 	
\end{proof}

 \section{Numerical Results}
 \label{NR}
%Simulation metrics used to evaluate the performance of  URLLC  mainly refers to reliability and latency. The introduction of CSI authentication to protect  URLLC  brings extra  metrics and factors required to be considered, including retrievability, accessibility and channel estimation errors. 

\begin{table}[!t]\footnotesize
	\caption{\label{tab_testurllc}Simulation Parameters and Values}
	\vspace{-15pt}
	\begin{center}
		\footnotesize
		\begin{tabular*}{9cm}{@{\extracolsep{\fill}}ll}
			\toprule
			\multicolumn{1}{c}{ Simulation Parameters}  & \multicolumn{1}{c}{ Values }\\
			\midrule
			\multicolumn{1}{c}{Modulation}&\multicolumn{1}{c}{OFDM }
			\\
			\multicolumn{1}{c}{Subcarrier  spacing  }&\multicolumn{1}{c}{$\Delta f=$60kHz,120kHz}
			\\
			\multicolumn{1}{c}{Channel bandwidth }&\multicolumn{1}{c}{$\le$ 100MHz, $\le$400MHz}
			\\
			\multicolumn{1}{c}{Coherence bandwidth}&\multicolumn{1}{c}{$3 \times \Delta f$}
			\\
			\multicolumn{1}{c}{OFDM symbol duration }&\multicolumn{1}{c}{${T_{\rm{s}}}=17.84 {\rm {\mu s}}, 8.93 {\rm {\mu s}}$}
			\\
			\multicolumn{1}{c}{Number of OFDM symbols for TRP}&\multicolumn{1}{c}{$m_{\rm E} = K+2$}
			\\
			\multicolumn{1}{c}{Number of  subcarriers for TRP }&\multicolumn{1}{c}{$N_{\rm E}\le 1024$ }
			\\
			\multicolumn{1}{c}{Number of  subcarriers for channel estimation}&\multicolumn{1}{c}{${ N}_{\rm CE}=128$ }
			\\
			\multicolumn{1}{c}{Pilot subcarrier arrangement for TRP}&\multicolumn{1}{c}{ Proposed Coding}
			\\
			\multicolumn{1}{c}{Pilot subcarrier arrangement for  channel estimation}&\multicolumn{1}{c}{Block type }
			\\
			\multicolumn{1}{c}{Channel  estimator at gNB}&\multicolumn{1}{c}{Least Square}
			\\
			\multicolumn{1}{c}{ Number of subcarriers for data transmission}&\multicolumn{1}{c}{$N_{\rm D}= 4$ }
			\\
			\multicolumn{1}{c}{Combing technique at gNB }&\multicolumn{1}{c}{ Matched filter receiver}
			\\
			\multicolumn{1}{c}{ Size of data packets}&\multicolumn{1}{c}{$R=32~{\rm Bytes}$ }
			\\
			\multicolumn{1}{c}{ SINR of receiving  data at gNB}&\multicolumn{1}{c}{$\gamma =10 {\rm dB}$ }
			\\
			\multicolumn{1}{c}{Time consumed by other operations }&\multicolumn{1}{c}{${T_{\rm{extra}}}=100, 300~{\rm {\mu s}}$ }
			\\
			\multicolumn{1}{c}{Number of antennas at gNB}&\multicolumn{1}{c}{ $N_{\rm T}= 128$}
			\\
			\multicolumn{1}{c}{ Channel fading model}&\multicolumn{1}{c}{Rayleigh}
			\\
			\multicolumn{1}{c}{Number of ALUs}&\multicolumn{1}{c}{$K\le 32$}
			\\
			\multicolumn{1}{c}{Number of channel taps}&\multicolumn{1}{c}{$L=6$}
			\\
			\multicolumn{1}{c}{Failure probability requirement }&\multicolumn{1}{c}{$P_{\rm e}\le 10^{-5}$}
			\\
			\multicolumn{1}{c}{Latency constraints}&\multicolumn{1}{c}{$T_{\rm con}=1{\rm ms}$}
			\\
			\bottomrule
		\end{tabular*}
	\end{center}
	\vspace{-15pt}
\end{table}
In this section, we will evaluate the  performance of uplink access of grant-free URLLC using quantum  learning  based nonrandom  superimposed coding, This refers to several metrics, including the code rate, overheads, reliability and latency under channel estimation errors. Frequency range 1 (FR1) for Sub-6 GHz and Frequency range 2 (FR2) for millimeter wave in 5G NR are considered respectively. The system is expected to work within at most 100 MHz channel bandwidth for  FR1 and at most 400 MHz channel bandwidth for FR2~\cite{TS38_300}.  Simulation parameters and values can be seen in Table~\ref{tab_testurllc}.  Note that one pilot every three consecutive subcarriers is inserted  to  acquire  independent frequency-domain variations of the channels. Given above bandwidth constraints,  the maximum number of  pilot subcarriers for TRP is limited to 512. Therefore,  at most $512 \times 3 \times 240=368.84\rm MHz $  channel bandwidth is occupied when $\Delta f=120 \rm kHz$ and at most $512 \times 3 \times 64=98.304 \rm MHz $  channel bandwidth is required when $\Delta f=60 \rm kHz$. 

\begin{figure}[!t]
	\centering \includegraphics[width=1\linewidth]{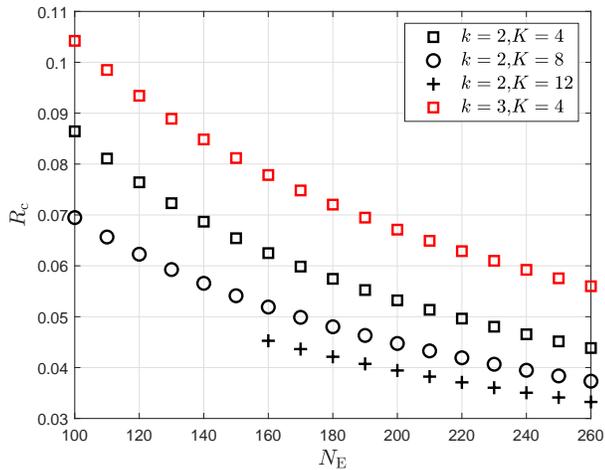}
	\caption{ The code rate $ R_{\rm c}$ versus $N_{\rm E}$ under various $k$ and $K$; . }	
	\label{Coderate}
	\vspace{-15pt}
\end{figure}
Fig.~\ref{Coderate} presents the curve of code rate $ R_{\rm c}$  versus $N_{\rm E}$. As we can see, increasing $K$ and $N_{\rm E}$ will reduce  the code rate;  for example, $ R_{\rm c}$  decreases from 0.06 to 0.05 if  $K$ increases from 4 to 8 at  $k=2$ and $N_{\rm E}=170$. Furthermore, increasing  $k$ will  increase the code rate; for example, $ R_{\rm c}$  increases from 0.07 to 0.085 if $k$ increases from 2 to 3 when  $K=4$ and $N_{\rm E}=140$; Increasing  $K$  will also increase  $N_{\rm E}$ since the lower bound of  available $N_{\rm E}$, that is, $K\left( {k - 1} \right)\left[ {1 + K\left( {k - 1} \right)} \right]$,   increases with the increase of $K$; for example, when $k=2$, the increase of $K$ from 8 to 12 will increase the lower bound of $N_{\rm E}$   from 72 to 156. We can also find that the influence of proposed  scheme on frequency-domain resource overheads is  greater than that on code-domain resource overheads, which means that higher degree of optimization on code rate can be performed to reduce the subcarrier resource  consumption, given a certain level of reliability. 

\begin{figure}[!t]
	\centering \includegraphics[width=1\linewidth]{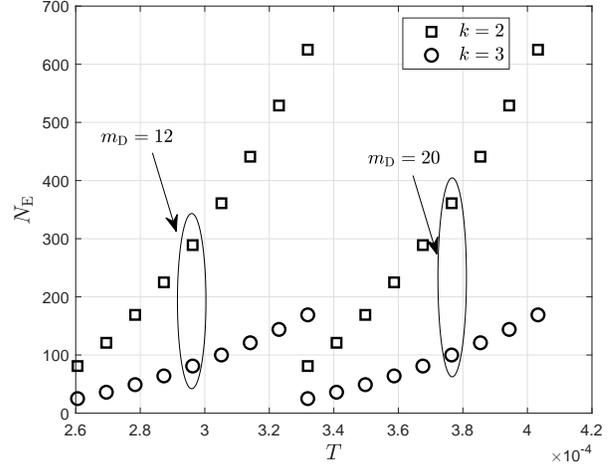}
	\caption{ Time-frequency domain resource overheads. }	
	\label{Subcarrier_latency}
	\vspace{-15pt}
\end{figure}
In order to clearly investigate the influence of the proposed scheme  on time-frequency resource overheads, we in  Fig.~\ref{Subcarrier_latency} present the curves of $N_{\rm E}$ versus $T$, respectively  under  $m_{\rm D}=12$ and $m_{\rm D}=20$.  The parameter $T$  satisfies $T = \left( {m_{\rm E}+ m_{\rm D}} \right)\times {T_{\rm s}} + {T_{\rm  extra}}$. ${T_{\rm  extra}}$ is fixed to be $100 ~{\rm {\mu s}}$ and ${T_{\rm{s}}}=8.93 {\rm {\mu s}}$ is configured under $\Delta f=120 \rm kHz$. When $K$ increases from 4 to 12, the total latency $T$ and the subcarrier overheads $N_{\rm E}$ both increase.  This tendency does not change under various  $m_{\rm D}$, meaning that both  time-domain and frequency-domain resources are critically important for pilot protection. Indeed,  pilot subcarriers under $k=2$ are consumed more with respect  to  those under  $k=3$.  This is because larger $k$ can bring  higher coding diversity on code domain and thus reduce the  consumption of frequency-domain resources. Fig.~\ref{Subcarrier_latency}  also verifies the fact that  the influence of proposed scheme on frequency-domain resource overheads is  greater than that on time-domain resource overheads.

\begin{figure}[!t]
	\centering \includegraphics[width=1\linewidth]{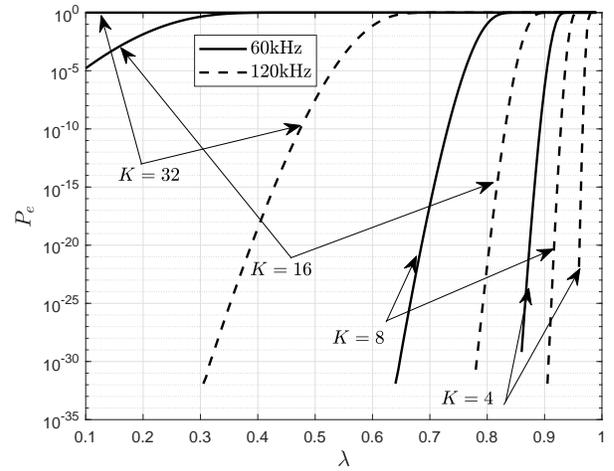}
	\caption{ Failure probability versus channel estimation errors. }	
	\label{Reliability_estimation}
	\vspace{-15pt}
\end{figure}
Fig.~\ref{Reliability_estimation} shows the curves of failure probability versus  channel estimation error $\lambda$.  $\Delta f=60 \rm kHz$ with  $T_{s}=17.84 \times 10^{-6} \rm s$ and  $\Delta f=120 \rm kHz$ with  $T_{s}= \times 10^{-6} \rm s$   are respectively configured under $N_{\rm E}=128$ and $\gamma_{0}=0.1$.  $m_{\text{D}}$  is determined by ${m_{\text{D}}} = {{\left( {{T_{{\text{con}}}} - {m_{\text{E}}}{T_s} - {T_{{\text{extra}}}}} \right)} \mathord{\left/
		{\vphantom {{\left( {{T_{{\text{con}}}} - {m_{\text{E}}}{T_s} - {T_{{\text{extra}}}}} \right)} {{T_s}}}} \right.
		\kern-\nulldelimiterspace} {{T_s}}}$ with ${T_{\rm{extra}}}=300{\rm {\mu s}}$. 
As we can see, $P_{\rm e}$  increases  with the  increase of  $\lambda$ if $\lambda$ is above a certain threshold which changes with the number  $K$ of ALUs. 
When  $K$  increases, the fluctuation of this threshold is more sensitive to the  changes  of  channel estimation errors.  Since channel estimation errors  cannot be eliminated completely,  more ALUs  would  introduce more intra-user interference and  disturbance caused by imprecise  channel estimation.  Especially when $K$ is configured to be larger than 16, FR1 cannot support proposed new grant-free URLLC system. 
\begin{figure}[!t]
	\centering \includegraphics[width=1\linewidth]{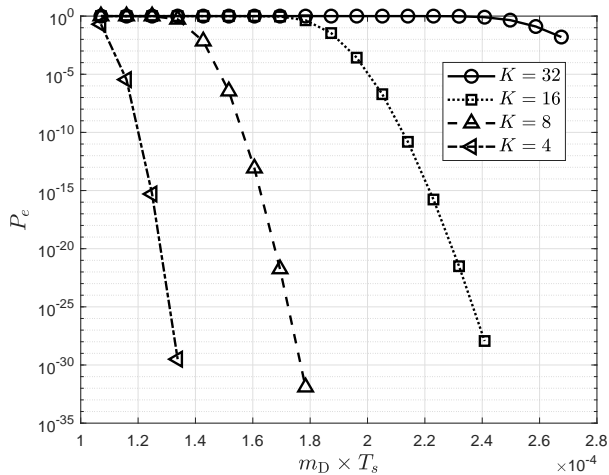}
	\caption{ Failure probability versus the  transmission latency. }	
	\label{Reliability_latency_Novel}
	\vspace{-15pt}
\end{figure}

Fig.~\ref{Reliability_latency_Novel} depicts  the  reliability-latency tradeoff curves under various $K$.  $N_{\rm E}$ is configured to be 128 and  $m_{\text{D}}$ is selected from $12 \times T_{\rm s}$ to $30 \times T_{\rm s}$. ${T_{\rm  extra}}$ is fixed to be $300 ~{\rm {\mu s}}$ and ${T_{\rm{s}}}$ is configured to be $8.93 {\rm {\mu s}}$ under $\Delta f=120 \rm kHz$.   $\lambda$ is chosen as $0.2$.  We can see that data transmission latency   should not be lower than a certain threshold if it is expected to achieve the reliability of $99.999\% $  or more. With the increase of $K$, the failure probability is increased and the reliability is decreased. Especially when $K$ is equal to 32, the system reliability  can hardly satisfy the requirements of 5G URLLC. Those hints provides guiding principles for designing and implementing grant-free URLLC system under pilot-aware attack.

\section{Conclusions}
\label{Conclusions}
In this paper,  we proposed a quantum  learning  based nonrandom  superimposed coding  method  to  support flexible, secure and efficient SAP encoding/decoding such that those SAPs can  be used for  TRP in uplink access process of grant-free URLLC systems threated by a pilot-aware attack. This design changed the main access procedures as those including AMD, UAD,  TRP,  channel estimation and data transmission. A quantum learning network was designed to learn the uncertainty of attack on SAP decoding precisely. With the help of this network, AMD, UAD and TRP could be precisely realized and meanwhile  the performance of channel estimation and data transmission can be recovered to a normal level.  By considering those procedures together, we derived novel expressions of failure probability of the new  grant-free URLLC system to evaluate its reliability. With above efforts, we can say that it is feasible to use quantum  learning to help information coding  defend against unknown attack on itself, which is an interesting  direction for future research.

	\begin{IEEEbiography}[{\includegraphics[width=1in,height=1.25in,clip,keepaspectratio]{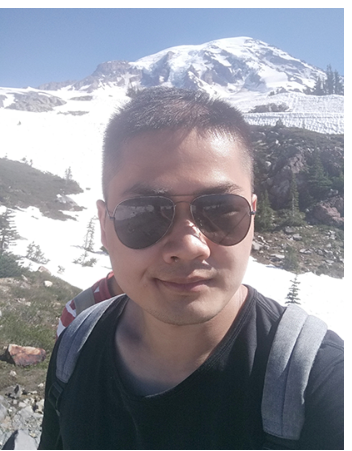}}]
	{Dongyang Xu} (M'19)  received the Ph.D. degree in Information and Communications Engineering in 2019 from Xi'an Jiaotong University, China.  Since 2019, he has been appointed as assistant professor of the School of Information and Communications Engineering at Xi'an Jiaotong University, Shaanxi 710049, China.  From January 2017 to January 2018, He was a visiting scholar under the supervision of Prof. James. A. Ritcey  at Department of Electrical $\&$ Computer Engineering, University of Washington, Seattle, USA. His current research interests  include   information security, coding theory, 5G/6G and quantum information science. He has published 30 technical papers on international journals and conferences. He received the Best Paper Rewards from  IEEE \textsc{China Communications} in 2017. He also served as the Technical Program Committee Member for IEEE/CIC ICCC in 2017 and WCSP 2019.  He is a  Member of IEEE and IEEE Communications Society.
\end{IEEEbiography}
\begin{IEEEbiography}
	[{\includegraphics[width=1in,height=1.25in,clip,keepaspectratio]{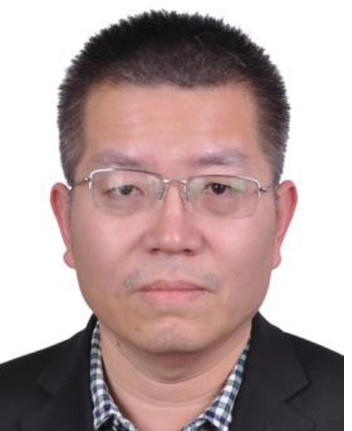}}]{Pinyi Ren} (M'10) received the Ph.D. degree in Electronic and Communications System, the M.S. degree in Information and Communications Engineering, the B.S. degree in Information and Control Engineering, in 2001, 1997, and 1994, respectively, all from Xi'an Jiaotong University, China. He is currently a Professor of the Information and Communications Engineering Department, Xi'an Jiaotong University, China. He has published over 100 technical papers on international Journals and conferences. He received the Best Letter Award of IEICE Communications Society 2010. He has over 30 Patents (First Inventor) authorized by Chinese Government. Dr. Pinyi Ren serves as an Editor for the Journal of Xi¡¯an Jiaotong University, an Editor for the Journal of Electronics and Information Technology, and has served as the Leading Guest Editor for the Special Issue of Mobile Networks and Applications on "Distributed Wireless Networks and Services" and the Leading Guest Editor for the Special Issues of Journal of Electronics on "Cognitive Radio". He has served as the General Chair of ICST WICON 2011, and frequently serves as the Technical Program Committee members of IEEE GLOBECOM, IEEE ICC, IEEE CCNC, etc. Dr. Pinyi Ren is a Member of IEEE and IEEE Communications Society.
\end{IEEEbiography}
\end{document}